\documentclass[aps,prx,reprint,superscriptaddress,longbibliography,nofootinbib]{revtex4-2}

\usepackage[T1]{fontenc}
\usepackage{lmodern}
\usepackage{amsmath,amssymb,mathtools,amsthm,bm}
\usepackage{microtype}
\usepackage{booktabs}
\usepackage{graphicx}
\usepackage{array}
\usepackage{enumitem}
\usepackage{xcolor}
\usepackage[colorlinks=true,linkcolor=blue,citecolor=blue,urlcolor=blue]{hyperref}
\usepackage{tikz}
\usetikzlibrary{arrows.meta,positioning,fit,calc,shapes.geometric}

\newtheorem{theorem}{Theorem}
\newtheorem{proposition}{Proposition}

\newtheorem{corollary}{Corollary}
\theoremstyle{definition}
\newtheorem{definition}{Definition}

\newcommand{\A}{\mathcal A}
\newcommand{\C}{\mathcal C}
\newcommand{\V}{\mathcal V}
\newcommand{\D}{\mathcal D}
\newcommand{\T}{\mathcal T}
\newcommand{\R}{\mathbb R}

\newcommand{\Tr}{\operatorname{Tr}}
\newcommand{\Var}{\operatorname{Var}}

\newcommand{\range}{\operatorname{range}}

\newcommand{\norm}[1]{\left\lVert #1\right\rVert}

\newcommand{\inner}[2]{\left\langle #1,#2\right\rangle}
\newcommand{\Phiopt}{\Phi_{\rho,c}^{\D}}

\newcommand{\eps}{\varepsilon}

\begin{document}

\title{Certified Optimal Measurement Reduction over Quantum Context Landscapes}

\author{Federico Zahariev}
\altaffiliation{Also at: Department of Chemistry and Ames National Laboratory, Iowa State University, Ames, Iowa 50011, USA}
\author{Vanda Glezakou}
\affiliation{Chemical Sciences Division, Oak Ridge National Laboratory, Oak Ridge, Tennessee 37830, USA}

\date{July 18, 2026}

\begin{abstract}
Quantum-measurement reduction contains two distinct global-optimization layers: a continuous problem of splitting an observable and allocating shots within a fixed measurement dictionary, and a nonconvex outer problem of designing the dictionary and calibrating its data-driven uncertainty model.  We solve the inner layer globally and certifiably as a second-order cone program (SOCP), and use RANGE, a robust adaptive nature-inspired global optimizer, for the genuinely combinatorial and statistical outer layer.  For any declared set of contexts, per-shot costs, score functions, and covariance model, the SOCP returns the minimum leading shot cost among unbiased linear stratified estimators.

The conic dual supplies an independently checkable lower-bound witness, while a primal schedule supplies an upper bound.  After feasibility repair, an external verifier recomputes $L\leq\Phi\leq U$ from stored data and declared tolerances without trusting the optimizer.  Pilot measurements yield simultaneous finite-sample covariance brackets, and the dual becomes a pricing oracle for omitted contexts.  RANGE is coupled to this certificate rather than used in place of it: discrete RANGE searches covering sub-dictionaries, Pareto compression fronts, and candidate contexts scored by an exact dual price or a complete certified SOCP solve; continuous RANGE performs an explicitly empirical, coverage-constrained calibration of covariance-radius models, while the rigorous certificates retain the proved finite-sample radius.  Thus global search is reserved for the nonconvex design choices surrounding a convex core whose optimum is proved.

This division of labor produces both performance and diagnosis.  RANGE compresses molecular context dictionaries by $4.3$--$6.1\times$ while increasing the certified frontier by only $0.2$--$2.1\%$, and exposes a sharp dictionary-size-versus-cost knee.  Standard strategies are exactly optimal for H$_2$ yet leave factors of $2.1$--$7.7$ in shots within their own settings by H$_2$O.  Adding fully commuting contexts lowers the certified optimum by as much as $56\%$ with exact depolarized-ground-state covariances; on 29--35-qubit production $f$-element Hamiltonians under a declared Hartree--Fock-proxy covariance model, the corresponding capped-dictionary enlargement saves $31$--$70\%$ of the shots.  The production study also shows that transformations reducing block-encoding cost need not reduce sampling cost.  Measurement reduction thereby gains a common standard: global exploration by RANGE where the landscape is nonconvex, and schedules, dual witnesses, and certified gaps where it is convex.
\end{abstract}

\maketitle

\section{The measurement problem needs certificates}

\emph{The problem, plainly.}  A quantum computer returns samples, not an energy.  Estimating $\langle H\rangle$ therefore requires many state preparations and measurements, and for chemically relevant Hamiltonians this shot cost can dominate the calculation.  Existing work has produced increasingly sophisticated schedules, but a schedule-to-schedule comparison leaves a basic question unanswered: within the same declared measurement model, how many shots are unavoidable?  We make that question precise as the minimum leading cost among unbiased linear stratified estimators induced by the declared single-shot scores, compute the minimum, and attach a verifiable lower-bound witness.

Near-term quantum algorithms repeatedly estimate expectation values of observables that cannot be measured directly.  In quantum chemistry the electronic Hamiltonian is normally expanded into many Pauli strings or fermionic terms, and the number of measurement shots can dominate the total runtime.  This bottleneck motivated a large family of measurement-reduction strategies: qubit-wise and fully commuting Pauli grouping, minimum clique-cover heuristics, entangling Clifford measurements, overlapped grouping, classical shadows, derandomized shadows, locally biased shadows, fermionic and Cartan fragments, matchgate shadows, and optimized post-processing frames \cite{Verteletskyi2020MCC,Yen2023Deterministic,Wu2023OverlappedGrouping,Huang2020ClassicalShadows,Huang2021Derandomized,Hadfield2022LocallyBiased,Wan2023Matchgate,Choi2023Fluid,Fischer2024DualFrames,Gresch2025ShadowGrouping,Li2025ROGS}.

These methods are powerful, but a typical benchmark reports a schedule and its estimated variance without a matching lower bound over the same allowed measurements.  The distinction matters: an apparent improvement may reflect a better point in a fixed landscape, an enlarged landscape, or merely a baseline that left coefficient-splitting freedom unoptimized.  A certificate separates those cases.

One scope remark prevents a misreading.  The certificate developed here governs the \emph{sampling route} to expectation values: variational algorithms, analog simulators, and any protocol that estimates \(\langle H\rangle_\rho\) by repeated preparation and measurement of contexts.  Fault-tolerant phase-estimation pipelines based on qubitization or Trotterized walk operators measure a phase register in the computational basis and have no Pauli-measurement bottleneck; their costs are governed by block-encoding norms and are untouched by the present results.  The two routes are complementary: the same Hamiltonian compression that lowers block-encoding norms (symmetry shifts, tapering, orbital truncation) also reshapes the measurement landscape studied here, and Sec.~\ref{sec:felement} quantifies this on production heavy-element Hamiltonians.  In the language of the companion cost-factorization framework \cite{ZaharievOrtiz2026Factorization}, in which the total cost of a dynamical quantity factors into a simulation factor and a measurement factor $f(\alpha_O,\varepsilon)$, the present certificate replaces the generic $1$-norm bound $f = \alpha_O^2/\varepsilon^2$ for the sampling route by the exact, certified leading constant $\Phi^2/\varepsilon^2$ of Theorem~\ref{thm:main}.

This work proposes a certificate-centered formulation.  Its contributions, and their relation to prior art, are as follows.
\begin{enumerate}[leftmargin=1.4em,itemsep=1pt,topsep=2pt]
\item \emph{Exact optimum, with a classical pedigree} (Theorem~\ref{thm:main}, Corollary~\ref{cor:socp}).  For any finite dictionary of measurement settings, covariance model, and per-setting costs, the minimum leading shot constant over all linear stratified estimators induced by the declared scores is the value of a second-order cone program.  The optimization core is a quantum-context specialization of multiresponse $c$-optimal experimental design, where the SOCP reduction and its dual geometry are classical (Sec.~\ref{sec:oed}) \cite{Elfving1952,DetteHollandLetz2009,Sagnol2011}.  In the quantum setting, iterative coefficient splitting \cite{Yen2023Deterministic} introduced the overlapping splitting problem and optimizes it by alternation; prior covariance-aware methods optimize within heuristic families (sorted insertion, overlapped splitting, biased distributions).  The conic program is the global convex closure of that design space for the declared dictionary, subsuming the coefficient-splitting freedom of all of them at once.
\item \emph{An independently checkable certificate} (Theorem~\ref{thm:dual}).  The conic dual supplies a witness $y$ whose value lower-bounds every schedule in the landscape.  A verifier repairs and recomputes $L \le \Phi \le U$ from stored JSON, rather than trusting solver-reported feasibility or objective values.  To our knowledge, existing Pauli-grouping and observable-estimation implementations do not routinely export a schedule-specific dual-feasible witness that lower-bounds every unbiased linear stratified estimator over the same declared context and score dictionary.
\item \emph{Validity under covariance uncertainty} (Theorem~\ref{thm:robust}).  Pilot measurements alone yield simultaneous high-confidence brackets $[L_{\rm rob}, U_{\rm rob}]$ on the true landscape optimum, via covariance sandwiches with a shrinkage lower matrix that provably respects the dual range condition.
\item \emph{Certifying the dictionary itself} (Proposition~\ref{prop:colgen}).  The dual witness becomes a pricing oracle: column generation either finds a missing setting that provably helps or terminates with optimality certified over a landscape larger than the dictionary actually solved.
\item \emph{The certificate as a grading instrument} (Sec.~\ref{sec:strategies}).  Applied to strategy classes from the literature, the certificate distinguishes three regimes that schedule-versus-schedule benchmarks cannot: a heuristic that is already optimal in its own landscape, a heuristic that leaves gauge freedom on the table, and a landscape that must itself be enlarged.
\item \emph{Production-scale benchmarks} (Sec.~\ref{sec:felement}).  Model-certified measurement costs for $f$-element Hamiltonians from an exascale pipeline, together with, to our knowledge, the first dual-certified stage-by-stage study of how block-encoding-motivated compression reshapes sampling cost.
\end{enumerate}
Fix a finite dictionary of allowed measurement contexts.  A context may be a Pauli basis, a commuting stabilizer group, a fermionic occupation basis after a Gaussian transformation, a hardware-native setting, or any other measurement whose single-shot outcomes allow one to estimate a specified commutative algebra of observables.  The dictionary defines a geometric object: a collection of visible subspaces
\begin{equation}
  \D=\{\V_1,\ldots,\V_L\}\subseteq \A,
\end{equation}
inside a real observable space \(\A\).  A measurement schedule decomposes the target \(H\) into fragments \(F_\ell\in \V_\ell\), measures each fragment in its context, and allocates shots.  Different decompositions can have the same expectation value but different variances.  That freedom is a \emph{fragment gauge}.  Our main result identifies the optimal gauge and gives a certificate of optimality.

\begin{figure*}[t]
\centering
\begin{tikzpicture}[font=\small, node distance=1.15cm, >=Latex]
\tikzstyle{box}=[draw, rounded corners, thick, align=center, inner sep=6pt, minimum height=1.0cm]
\tikzstyle{smallbox}=[draw, rounded corners, align=center, inner sep=5pt, minimum height=0.8cm]
\node[box, minimum width=3.0cm] (land) {Measurement landscape\\\(\D=\{\V_\ell,c_\ell\}\)};
\node[box, right=1.35cm of land, minimum width=3.2cm] (cov) {State information\\covariance model\\\(\Gamma_\ell\)};
\node[box, right=1.35cm of cov, minimum width=3.4cm] (socp) {Convex gauge\\optimization\\\(\min\sum_\ell\sqrt{c_\ell}\|F_\ell\|_\rho\)};
\node[box, right=1.35cm of socp, minimum width=3.2cm] (cert) {Certificate\\primal \(U\), dual \(L\),\\gap \(U/L-1\)};
\node[smallbox, below=0.85cm of socp, minimum width=3.5cm] (schedule) {stratified schedule\\fragments + shot allocation};
\node[smallbox, below=0.85cm of cert, minimum width=3.5cm] (pricing) {dual pricing oracle\\find missing contexts};
\draw[->, thick] (land) -- (cov);
\draw[->, thick] (cov) -- (socp);
\draw[->, thick] (socp) -- (cert);
\draw[->, thick] (socp) -- (schedule);
\draw[->, thick] (cert) -- (pricing);
\node[below=0.18cm of schedule, align=center, text width=12cm] (slogan) {The landscape supplies the feasible geometry; convex optimization chooses the measurement gauge; duality certifies optimality.};
\end{tikzpicture}
\caption{Measurement reduction as geometry plus optimization.  The allowed contexts define visible subspaces.  A covariance model, obtained from a proxy state or from pilot data, turns each visible subspace into a variance norm.  The optimal decomposition and shot allocation are obtained by a conic program.  The dual returns a certificate that lower-bounds every schedule in the declared landscape and, through column generation, identifies whether useful contexts are missing.}
\label{fig:pipeline}
\end{figure*}

\section{Context landscapes}

Let \(\mathcal H\) be a finite-dimensional Hilbert space and let
\begin{equation}
  \A\subseteq \operatorname{Herm}(\mathcal H)
\end{equation}
be the real vector space of observables under consideration.  We use the Hilbert-Schmidt pairing \(\inner{A}{B}=\Tr(AB)\), though any fixed coordinate pairing may be used.  Constants or other exactly known zero-variance components can be removed from \(H\) at the start; this avoids unnecessary notation about deterministic offsets.

\begin{definition}[Measurement context and visible subspace]
A measurement context \(\C\) is a measurement setting together with the real vector space \(\V_\C\subseteq\A\) of observables whose expectation values can be estimated from one shot of that setting by classical post-processing of the outcome.  In projective examples \(\V_\C\) is a commutative subalgebra.  In POVM examples it is the span of the chosen unbiased single-shot estimators associated with that setting.
\end{definition}

For a finite dictionary we write
\begin{equation}
  \D=\{(\V_\ell,c_\ell):\ell=1,\ldots,L\},
\end{equation}
where \(c_\ell>0\) is the cost of one shot in context \(\ell\).  The cost can encode wall-clock time, circuit depth, infidelity, calibration overhead, or simply the uniform value \(c_\ell=1\).

Choose a basis \(B_{\ell 1},\ldots,B_{\ell m_\ell}\) for \(\V_\ell\).  A context-\(\ell\) fragment is
\begin{equation}
  F_\ell(x_\ell)=\sum_{a=1}^{m_\ell}x_{\ell a}B_{\ell a},
  \qquad x_\ell\in\R^{m_\ell}.
\end{equation}
Let \(A_\ell:\R^{m_\ell}\to\R^d\) be the coordinate embedding into a finite target coordinate space \(\T\subseteq\A\) containing \(H\).  The target is represented by \(h\in\R^d\).

For context $\ell$, let $S_{\ell a}$ be the declared real-valued single-shot score for $B_{\ell a}$, so that $\mathbb E_\rho S_{\ell a}=\operatorname{Tr}(\rho B_{\ell a})$.  Define
\begin{equation}
  (\Gamma_\ell)_{ab}=\operatorname{Cov}_\rho(S_{\ell a},S_{\ell b}).
  \label{eq:covariance}
\end{equation}
For a projective context in which the commuting observables $B_{\ell a}$ are read out jointly by their eigenvalues, this becomes
\begin{equation}
  (\Gamma_\ell)_{ab}
  =\operatorname{Tr}(\rho B_{\ell a}B_{\ell b})
  -\operatorname{Tr}(\rho B_{\ell a})\operatorname{Tr}(\rho B_{\ell b}).
\end{equation}
Equation~\eqref{eq:covariance}, rather than the operator representation, is the general definition and also covers declared unbiased POVM score functions.  We write
\begin{equation}
  \norm{F_\ell(x_\ell)}_\rho
  =\sqrt{x_\ell^T\Gamma_\ell x_\ell}.
\end{equation}
This is a seminorm.  Deterministic directions may be quotiented out; equivalently, the dual uses a Moore--Penrose pseudoinverse together with an explicit range condition.

\section{Exact optimum over a finite landscape}

A stratified estimator chooses fragments \(F_\ell\in\V_\ell\) satisfying
\begin{equation}
  H=\sum_{\ell=1}^L F_\ell,
  \qquad\text{or in coordinates}\qquad
  h=\sum_{\ell=1}^L A_\ell x_\ell.
  \label{eq:fragment_constraint}
\end{equation}
It then measures each \(F_\ell\) in its own context with \(N_\ell\) shots and adds the sample means.  The estimator is unbiased and has variance
\begin{equation}
  \Var(\widehat H)=\sum_{\ell=1}^L \frac{x_\ell^T\Gamma_\ell x_\ell}{N_\ell}.
\end{equation}
The optimization over shot allocation can be performed analytically.

\begin{theorem}[Landscape-optimal stratified estimator]\label{thm:main}
Fix a finite context dictionary \(\D=\{(\V_\ell,c_\ell)\}_{\ell=1}^L\), covariance matrices \(\Gamma_\ell\), and target \(h\) in the span of the dictionary.  Among all linear stratified estimators induced by the declared single-shot scores over this dictionary, the minimum leading cost required to achieve mean-square error at most \(\eps^2\) is
\begin{equation}
  C_{\rm opt}(\eps)
  =\frac{\Phiopt(h)^2}{\eps^2},
  \label{eq:cost_from_phi}
\end{equation}
where
\begin{equation}
\boxed{\begin{aligned}
  \Phiopt(h)=
  \min_{\{x_\ell\}}\quad
  &\sum_{\ell=1}^L \sqrt{c_\ell}\,\sqrt{x_\ell^T\Gamma_\ell x_\ell}\\
  \text{s.t.}\quad
  &\sum_{\ell=1}^L A_\ell x_\ell=h .
\end{aligned}}
\label{eq:primal}
\end{equation}
For any optimal decomposition, the asymptotically optimal shot allocation is
\begin{equation}
  N_\ell
  =\frac{\Phiopt(h)}{\eps^2}
  \sqrt{ \frac{x_\ell^T\Gamma_\ell x_\ell}{c_\ell} },
  \label{eq:shot_allocation}
\end{equation}
with the usual integer rounding.  Conversely, no unbiased stratified estimator supported on \(\D\) can improve the leading constant in Eq.~\eqref{eq:cost_from_phi}.
\end{theorem}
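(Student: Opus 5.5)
The plan is to split the problem into an inner shot-allocation step, solved in closed form for a fixed decomposition, followed by an outer minimization over decompositions.

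Fix a feasible decomposition \(\{x_\ell\}\) with \(\sum_\ell A_\ell x_\ell=h\), and write \(v_\ell=x_\ell^T\Gamma_\ell x_\ell\ge 0\). The stratified estimator built from it is unbiased. Its mean-square error therefore equals its variance, \(\sum_\ell v_\ell/N_\ell\). The leading-order cost problem for this decomposition is to minimize \(\sum_\ell c_\ell N_\ell\) subject to \(\sum_\ell v_\ell/N_\ell\le\eps^2\), treating the \(N_\ell>0\) as continuous; rounding only affects lower-order terms. Contexts with \(v_\ell=0\) receive a single shot, or none if \(x_\ell\) itself can be dropped. Their cost is \(O(1)\), independent of \(\eps\), so it does not enter the leading constant.

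For the remaining contexts I will apply Cauchy--Schwarz:
\begin{equation*}
\Bigl(\sum_\ell \sqrt{c_\ell v_\ell}\Bigr)^2
=\Bigl(\sum_\ell \sqrt{c_\ell N_\ell}\,\sqrt{v_\ell/N_\ell}\Bigr)^2
\le \Bigl(\sum_\ell c_\ell N_\ell\Bigr)\Bigl(\sum_\ell v_\ell/N_\ell\Bigr).
\end{equation*}
This gives \(\sum_\ell c_\ell N_\ell\ge(\sum_\ell\sqrt{c_\ell v_\ell})^2/\eps^2\) for every allocation meeting the error budget. Equality holds when \(\sqrt{c_\ell}\,N_\ell\propto\sqrt{v_\ell}\). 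Choosing the proportionality constant so that the variance constraint is tight yields exactly Eq.~\eqref{eq:shot_allocation}, with \(\Phiopt(h)\) replaced by \(\sum_\ell\sqrt{c_\ell v_\ell}\). I will verify this directly: substituting the allocation gives variance \(\eps^2\) and cost \((\sum_\ell\sqrt{c_\ell v_\ell})^2/\eps^2\).

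Next I minimize over decompositions. The objective \(\sum_\ell\sqrt{c_\ell}\,\|F_\ell(x_\ell)\|_\rho\) is a sum of seminorms composed with linear maps, so it is convex, continuous and nonnegative. The feasible set is a nonempty affine subspace, since \(h\) lies in the span of the dictionary. Attainment of the minimum is the one point that needs care, because the objective need not be coercive: it vanishes along \(\ker\Gamma_\ell\) directions. I will handle this by splitting each \(x_\ell\) into its \(\range\Gamma_\ell\) and \(\ker\Gamma_\ell\) components. The objective depends only on the range parts, so the problem reduces to minimizing a sum of norms over the image of the affine set under this projection. That image is a polyhedral-type set, namely the projection of an affine set, hence a closed affine set. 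On it the objective is coercive, so a minimizer exists. Equivalently, the minimum of a convex program with finite, bounded-below value over a polyhedral constraint set is attained. Since the map \(t\mapsto t^2\) is monotone on \(t\ge0\), minimizing \((\sum_\ell\sqrt{c_\ell v_\ell})^2/\eps^2\) over decompositions gives Eq.~\eqref{eq:cost_from_phi}.

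The converse follows from the same two steps in reverse. Any unbiased linear stratified estimator supported on \(\D\) is, by definition, given by some decomposition \(\{x_\ell\}\) and allocation \(\{N_\ell\}\); unbiasedness for the target forces \(\sum_\ell A_\ell x_\ell=h\), assuming unbiasedness is required for all states in the model. Its cost is therefore at least \((\sum_\ell\sqrt{c_\ell v_\ell})^2/\eps^2\ge\Phiopt(h)^2/\eps^2\). The main obstacle I expect is making ``leading cost'' precise under integer rounding. To address it, I will show that ceiling rounding of Eq.~\eqref{eq:shot_allocation} adds at most \(\sum_\ell c_\ell=O(1)\) to the cost while only decreasing the variance. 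The ratio of achieved cost to \(\Phiopt(h)^2/\eps^2\) then tends to \(1\) as \(\eps\to0\), and the Cauchy--Schwarz bound holds exactly for integer allocations, so the lower bound needs no asymptotics.
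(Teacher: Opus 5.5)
Your proof is correct and has the same two-stage structure as the paper's Appendix~\ref{app:proofs}. You fix a feasible decomposition, optimize the shot allocation in closed form, and then minimize \(\sum_\ell\sqrt{c_\ell v_\ell}\) over decompositions. The difference lies in how each stage is justified.

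The paper solves the inner problem by Lagrange stationarity. Its global optimality tacitly relies on convexity of \(N\mapsto v/N\), i.e.\ KKT sufficiency. The paper also absorbs \(v_\ell=0\) contexts ``by continuity'' and leaves both attainment of the outer minimum and the converse implicit.

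Your Cauchy--Schwarz step gives \(\sum_\ell c_\ell N_\ell\ge(\sum_\ell\sqrt{c_\ell v_\ell})^2/\eps^2\) directly for every allocation, including integer ones. This has three consequences:
\begin{itemize}
\item The converse becomes an exact, non-asymptotic statement. Only the achievability side needs the \(O(\sum_\ell c_\ell)\) ceiling-rounding correction.
\item Your range/kernel projection plus coercivity argument proves that a minimizer exists. The statement's ``\(\min\)'' and ``for any optimal decomposition'' presuppose this.
\item The one-shot treatment of zero-variance contexts is more precise than an appeal to continuity.
\end{itemize}
Your caveat that unbiasedness must hold for all states is exactly the paper's definition via Eq.~\eqref{eq:fragment_constraint}. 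The paper's route buys brevity, and it exhibits the allocation constructively from the multiplier.

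Delete the sentence offered as an ``equivalent'' justification of attainment. It claims that a convex program with finite, bounded-below value over a polyhedral set always attains its minimum, which is false in general; for example, \(\inf_{x\ge 1}1/x=0\) is not attained. What actually proves attainment is your projection-plus-coercivity argument, which uses the seminorm structure of the objective.
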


\begin{proof}[Proof sketch]
For fixed fragments, minimize \(\sum_\ell c_\ell N_\ell\) subject to \(\sum_\ell v_\ell/N_\ell\le \eps^2\), where \(v_\ell=x_\ell^T\Gamma_\ell x_\ell\).  The Lagrange equations give \(N_\ell\propto \sqrt{v_\ell/c_\ell}\), and the optimal cost is \((\sum_\ell\sqrt{c_\ell v_\ell})^2/\eps^2\).  Minimizing over all decompositions gives Eq.~\eqref{eq:primal}.  A full proof, including zero-variance directions, appears in Appendix~\ref{app:proofs}.
\end{proof}

Theorem~\ref{thm:main} is the first central message.  Once the contexts and their score bases are fixed, all coefficient splitting, overlapped grouping, fragment repartitioning, state-aware covariance use, and shot allocation collapse to one convex gauge over the declared scores.

\emph{Remark (score-basis convention and ghost directions).}  The optimum of Theorem~\ref{thm:main} is exact over the \emph{declared} score basis of each context, and the declaration matters.  A context's physically readable algebra generally contains directions with zero coefficient in the target (the ghost or shared Pauli products of Ref.~\cite{Choi2022Ghost} are the sharpest example), and such zero-target directions can lower fragment variances provided their split coefficients cancel in the aggregate reconstruction.  The framework accommodates them by enlarging the coordinate space: to admit a ghost direction $P_g$, add the coordinate $g$ to the global space, set the target coefficient $h_g = 0$, and give every context that reads $P_g$ the corresponding nonzero embedding column in $A_\ell$; the reconstruction constraint $\sum_\ell A_\ell x_\ell = h$ then \emph{enforces} cancellation of the ghost coefficients across contexts (a ghost readable in only one context is forced to zero, as unbiasedness requires).  The dictionaries declared in this paper build each context's score basis from the Hamiltonian terms visible in that context (up to the declared visibility cap), so every certified optimum below is exact over that basis.  Ghost-augmented score bases define a strictly larger landscape; admitting them is a basis-generation step close in spirit to the context pricing of Sec.~\ref{sec:colgen}, and a reduced-cost test for ghost admission remains to be derived.

\begin{corollary}[SOCP form]\label{cor:socp}
Let \(\Gamma_\ell=R_\ell^TR_\ell\).  Then Eq.~\eqref{eq:primal} is the second-order cone program
\begin{align}
  \min_{x_\ell,t_\ell}\quad
  &\sum_{\ell=1}^L \sqrt{c_\ell}\,t_\ell
  \label{eq:socp}\\
  \text{subject to}\quad
  &\sum_{\ell=1}^L A_\ell x_\ell=h,\nonumber\\
  &\norm{R_\ell x_\ell}_2\le t_\ell,
  \qquad \ell=1,
  \ldots,L.\nonumber
\end{align}
\end{corollary}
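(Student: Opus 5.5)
The plan is to show that the two programs have the same feasible decompositions and the same optimal value. The device is the epigraph reformulation together with a factorization of each covariance matrix.

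\textbf{Step 1 (factorization).} Each \(\Gamma_\ell\) is a covariance matrix, so it is symmetric positive semidefinite. It therefore admits a factorization \(\Gamma_\ell=R_\ell^TR_\ell\), for instance from a pivoted Cholesky factor or from \(R_\ell=\Gamma_\ell^{1/2}\). This works even when \(\Gamma_\ell\) is singular, which is exactly the case of zero-variance directions. For any such factor,
\begin{equation}
\sqrt{x_\ell^T\Gamma_\ell x_\ell}=\sqrt{x_\ell^TR_\ell^TR_\ell x_\ell}=\norm{R_\ell x_\ell}_2 .
\end{equation}
The objective of Eq.~\eqref{eq:primal} is thus \(\sum_\ell\sqrt{c_\ell}\,\norm{R_\ell x_\ell}_2\). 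The choice of factor does not matter, since only \(R_\ell^TR_\ell\) enters.

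\textbf{Step 2 (epigraph equivalence).} Both problems carry the same linear constraint \(\sum_\ell A_\ell x_\ell=h\).

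First take \((x,t)\) feasible for Eq.~\eqref{eq:socp}. Then \(x\) is feasible for Eq.~\eqref{eq:primal}. Because every \(c_\ell>0\), the cone constraints give
\begin{equation}
\sum_\ell\sqrt{c_\ell}\,t_\ell\ \ge\ \sum_\ell\sqrt{c_\ell}\,\norm{R_\ell x_\ell}_2 ,
\end{equation}
so the SOCP value is at least \(\Phiopt(h)\).

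Conversely, take \(x\) feasible for Eq.~\eqref{eq:primal} and set \(t_\ell=\norm{R_\ell x_\ell}_2\). This pair is feasible for Eq.~\eqref{eq:socp} and attains the same objective value, so the SOCP value is at most \(\Phiopt(h)\). The two values are therefore equal.

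The same argument shows the optima correspond: at any SOCP optimum every cone constraint is tight whenever \(c_\ell>0\), since otherwise decreasing \(t_\ell\) would lower the cost. Each cone \(\{(x_\ell,t_\ell):\norm{R_\ell x_\ell}_2\le t_\ell\}\) is the preimage of a standard Lorentz cone under a linear map. The objective and the equality constraint are linear, so Eq.~\eqref{eq:socp} is a genuine SOCP.

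\textbf{Step 3 (attainment; the main subtlety).} One must check that the minimum in Eq.~\eqref{eq:primal} is attained, so that "min" is justified. Feasibility holds because \(h\) lies in the span of the dictionary.

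The objective is a nonnegative seminorm-type convex function. Since it can vanish on nonzero directions, the usual coercivity argument fails. I would instead argue via polyhedral structure. Decompose \(x\) into its component in the joint kernel of the \(R_\ell\) and the orthogonal complement. The objective depends only on the \(R_\ell x_\ell\), so the problem reduces to minimizing over the image set
\begin{equation}
\bigl\{(R_\ell x_\ell)_\ell : \textstyle\sum_\ell A_\ell x_\ell=h\bigr\}.
\end{equation}
This set is an affine subspace, being the linear image of an affine set, and is therefore closed. On it, the function \(\sum_\ell\sqrt{c_\ell}\norm{z_\ell}_2\) is coercive. Its infimum is therefore attained, and this gives attainment for both programs.
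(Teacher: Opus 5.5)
Your proof is correct and follows the route the paper takes implicitly: the paper gives no separate proof of the corollary and treats it as immediate from writing $\sqrt{x_\ell^T\Gamma_\ell x_\ell}=\norm{R_\ell x_\ell}_2$ and introducing epigraph variables $t_\ell$, which is exactly your Steps 1--2. Your Step 3 is a valid addition: minimizing the norm $\sum_\ell\sqrt{c_\ell}\norm{z_\ell}_2$ over the closed affine image set justifies the ``min'' in Eq.~\eqref{eq:primal}, which the paper uses without comment, and the aside about decomposing along the joint kernel is unnecessary because the image-set argument stands on its own.
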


The conic form is important operationally.  It separates the hard problem of proposing useful contexts from the continuous problem of optimally using them.  Given a dictionary, the best gauge is not a heuristic choice; it is a convex optimization problem.

\section{Dual certificates}

The second central result is that Eq.~\eqref{eq:primal} has a checkable dual.  A primal solution gives an explicit schedule and an upper bound \(U\).  A dual feasible vector gives a lower bound \(L\) on the value of every schedule in the same dictionary.

For a PSD matrix \(\Gamma\), define the dual seminorm
\begin{equation}
  \norm{z}_{\Gamma,*}=
  \begin{cases}
  \sqrt{z^T\Gamma^\dagger z}, & z\in\range(\Gamma),\\
  +\infty, & z\notin\range(\Gamma),
  \end{cases}
  \label{eq:dual_seminorm}
\end{equation}
where \(\Gamma^\dagger\) is the Moore--Penrose pseudoinverse.

\begin{theorem}[Conic dual and optimality certificate]\label{thm:dual}
Under the assumptions of Theorem~\ref{thm:main}, consider the dual problem
\begin{equation}
\boxed{\begin{aligned}
  D^\star=
  \sup_{y\in\R^d}\quad &y^T h\\
  \text{s.t.}\quad
  &\norm{A_\ell^Ty}_{\Gamma_\ell,*}\le \sqrt{c_\ell},
  \qquad \forall \ell .
\end{aligned}}
\label{eq:dual}
\end{equation}
Every dual-feasible $y$ certifies $y^Th\le\Phiopt(h)$, without any regularity assumption.  If, after quotienting deterministic zero-variance directions, the target lies in the relative interior of the dictionary span, standard conic strong duality applies: the supremum is attained and $D^\star=\Phiopt(h)$.

If a primal schedule has value $U$ and a dual witness has value $L=y^Th>0$, then
\begin{equation}
\boxed{L\le \Phiopt(h)\le U}.
\end{equation}
Its shot cost is therefore at most $(U/L)^2$ times the optimum.  If $U=L$, the schedule is globally optimal in the declared landscape.
\end{theorem}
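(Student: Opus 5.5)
The plan is to prove weak duality directly via a generalized Cauchy--Schwarz inequality, then obtain strong duality from standard conic duality applied to the SOCP of Corollary~\ref{cor:socp}, and finally read off the bracket and cost ratio from Theorem~\ref{thm:main}.

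\emph{Weak duality.} Let $y$ be dual feasible and $\{x_\ell\}$ any primal-feasible decomposition. Then
\[
y^Th=\sum_\ell y^TA_\ell x_\ell=\sum_\ell (A_\ell^Ty)^Tx_\ell .
\]
Feasibility forces $z_\ell:=A_\ell^Ty\in\range(\Gamma_\ell)$, since otherwise the dual seminorm is $+\infty$. Hence $z_\ell=\Gamma_\ell\Gamma_\ell^\dagger z_\ell$. Write $\Gamma_\ell=R_\ell^TR_\ell$ and use $\Gamma_\ell^\dagger\Gamma_\ell\Gamma_\ell^\dagger=\Gamma_\ell^\dagger$. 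Then
\[
z_\ell^Tx_\ell=(R_\ell\Gamma_\ell^\dagger z_\ell)^T(R_\ell x_\ell),
\]
and Cauchy--Schwarz gives
\[
|z_\ell^Tx_\ell|\le \sqrt{z_\ell^T\Gamma_\ell^\dagger z_\ell}\,\sqrt{x_\ell^T\Gamma_\ell x_\ell}\le\sqrt{c_\ell}\,\|R_\ell x_\ell\|_2 .
\]
Summing over $\ell$ gives $y^Th\le\sum_\ell\sqrt{c_\ell}\sqrt{x_\ell^T\Gamma_\ell x_\ell}$. Taking the infimum over decompositions yields $y^Th\le\Phiopt(h)$. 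No constraint qualification is used, and this handles zero-variance directions correctly: a deterministic direction in $\ker\Gamma_\ell$ is harmless because $z_\ell$ is orthogonal to it.

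\emph{Deriving the dual and strong duality.} I would form the Lagrangian of the SOCP (Corollary~\ref{cor:socp}) with multiplier $y$ on the equality constraint:
\[
\inf_{x,t}\;\sum_\ell\sqrt{c_\ell}\,t_\ell+y^T\Big(h-\sum_\ell A_\ell x_\ell\Big).
\]
For each $\ell$ one needs $\inf_{x_\ell}\big[\sqrt{c_\ell}\|R_\ell x_\ell\|-(A_\ell^Ty)^Tx_\ell\big]$, which is $0$ or $-\infty$. It is $0$ exactly when $A_\ell^Ty=R_\ell^Tw$ with $\|w\|\le\sqrt{c_\ell}$. Using $\range(R_\ell^T)=\range(\Gamma_\ell)$ and the fact that the minimal-norm such $w$ has norm $\|A_\ell^Ty\|_{\Gamma_\ell,*}$, this gives exactly the constraint in Eq.~\eqref{eq:dual}.

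For equality $D^\star=\Phiopt(h)$ with dual attainment, I would invoke conic strong duality. The primal is bounded below by $0$ and finite because $h$ lies in the span of the dictionary. A relative-interior (Slater-type) condition on the primal then yields zero gap with the dual supremum attained. Under the hypothesis, after quotienting deterministic directions, the target lies in the relative interior of the dictionary span. One can then choose $x$ with $\sum_\ell A_\ell x_\ell=h$ and take $t_\ell>\|R_\ell x_\ell\|$ strictly, which gives a strictly feasible point of the conic constraints. Since the equality constraints are affine, this suffices.

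\textbf{Main obstacle.} The delicate step is the quotienting. One must check that removing $\ker\Gamma_\ell$ directions does not change either optimal value, and that it leaves the cones as genuine second-order cones on the reduced space. Without this, a primal infimum could fail to be attained while the dual is still finite.

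\emph{The bracket.} Given a primal schedule of value $U$, minimality gives $\Phiopt\le U$, and weak duality gives $L\le\Phiopt$. By Eq.~\eqref{eq:cost_from_phi}, the schedule's leading cost with optimal allocation is $U^2/\eps^2$. This is at most
\[
\frac{U^2}{L^2}\cdot\frac{\Phiopt^2}{\eps^2},
\]
so the cost ratio is at most $(U/L)^2$. When $U=L$, the bracket collapses and the schedule is globally optimal in the declared landscape.
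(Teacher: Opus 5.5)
Your proof is correct and follows the paper's route in Appendix~\ref{app:dual_derivation}: a Lagrangian derivation of the dual constraint, then weak duality, then standard conic strong duality. Your Cauchy--Schwarz/pseudoinverse computation and minimal-norm $w$ argument spell out the identification of the Lagrangian finiteness condition with $\norm{A_\ell^Ty}_{\Gamma_\ell,*}\le\sqrt{c_\ell}$, which the paper simply asserts. The quotienting you flag as the main obstacle is never actually used: your own Slater point (any feasible $x$ with $t_\ell>\norm{R_\ell x_\ell}_2$) exists as soon as $h$ lies in the dictionary span, so strong duality with dual attainment follows from feasibility alone. Primal attainment also holds automatically, since a seminorm minimized over an affine subspace in finite dimensions attains its infimum.
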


The dual witness is the certificate.  It can be stored and checked without trusting the optimizer that generated the schedule.  In the implementation used below, the verifier first repairs the primal equality residual with a declared repair map (sparse for the Pauli-incidence dictionaries used in the benchmarks), then range-checks and rescales the dual witness to feasibility, and finally recomputes $U$, $L$, and the gap.  Appendix~\ref{app:repair} gives the repair and the failure behavior for singular covariance matrices.

\section{Data-driven robust certificates}

The previous theorem is deterministic given \(\Gamma_\ell\).  In practice \(\Gamma_\ell\) may come from a proxy state, a classical approximation, previous iterations of a variational algorithm, or pilot measurements.  Proxy covariances are useful for design, but a certificate intended to support an experimental claim should account for covariance uncertainty.

Assume that a pilot stage returns simultaneous covariance confidence sets
\begin{equation}
  \mathcal K_\ell
  \subseteq \{\Gamma\succeq0\}
  \qquad (\ell=1,\ldots,L)
\end{equation}
with coverage probability at least \(1-\delta\):
\begin{equation}
  \Pr\{\Gamma_\ell(\rho)\in\mathcal K_\ell\text{ for all }\ell\}
  \ge 1-\delta.
  \label{eq:coverage}
\end{equation}
A particularly convenient case is a PSD sandwich
\begin{equation}
  \underline{\Gamma}_\ell\preceq \Gamma_\ell(\rho)\preceq \overline{\Gamma}_\ell .
  \label{eq:sandwich}
\end{equation}
We now supply that coverage statement with explicit constants, which upgrades the pilot brackets from a robust-optimization demonstration to a finite-sample confidence certificate.

\begin{theorem}[Finite-sample simultaneous coverage]
\label{thm:coverage}
Fix a dictionary of $L$ contexts.  In context $\ell$, let the visible Pauli operators be $P^{(\ell)}_1,\dots,P^{(\ell)}_{m_\ell}$ and let each shot return the outcome vector $v^{(\ell)}_i \in \{\pm1\}^{m_\ell}$ (the simultaneous eigenvalues read out in that context's basis), i.i.d.\ across $i = 1,\dots,N_\ell$ shots.  Let
\begin{align*}
\widehat\Gamma_\ell &= \frac{1}{N_\ell}\sum_{i=1}^{N_\ell} v^{(\ell)}_i v^{(\ell)\top}_i - \bar v^{(\ell)}\bar v^{(\ell)\top}, \\
\bar v^{(\ell)} &= \frac{1}{N_\ell}\sum_i v^{(\ell)}_i ,
\end{align*}
be the empirical covariance, and set the radius
\begin{align*}
\kappa_\ell &:= \log(2m_\ell) + \log(2L/\delta), \\
r_\ell(\delta) &= 4\, m_\ell \sqrt{\frac{\kappa_\ell}{N_\ell}} \;+\; \tfrac{3}{2}\,\frac{m_\ell \kappa_\ell}{N_\ell}.
\end{align*}
Then, with probability at least $1-\delta$, \emph{simultaneously for every context} $\ell$,
\[
\bigl\|\widehat\Gamma_\ell - \Gamma_\ell\bigr\|_{\rm op} \;\le\; r_\ell(\delta).
\]
\end{theorem}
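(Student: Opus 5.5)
We would prove the bound context by context at confidence level $\delta/L$ and then apply a union bound over the $L$ contexts; the factor $\log(2L/\delta)$ inside $\kappa_\ell$ is exactly what this allocation produces. Within a fixed context, drop the superscript $\ell$ and write $m=m_\ell$, $N=N_\ell$, $\mu=\mathbb E v_i$, and $M=\mathbb E v_iv_i^\top$, so that $\Gamma=M-\mu\mu^\top$. The error decomposes as
\[
\widehat\Gamma-\Gamma=\Bigl(\tfrac1N\textstyle\sum_i v_iv_i^\top-M\Bigr)-\bigl(\bar v\bar v^\top-\mu\mu^\top\bigr).
\]
We bound the two terms separately in operator norm.

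The first term is a sum of i.i.d.\ centered symmetric matrices $X_i=(v_iv_i^\top-M)/N$. Because $v_i\in\{\pm1\}^m$, we have $\|v_iv_i^\top\|_{\rm op}=\|v_i\|_2^2=m$ and $\|M\|_{\rm op}\le m$, hence $\|X_i\|_{\rm op}\le 2m/N$. For the variance proxy, $(v_iv_i^\top)^2=m\,v_iv_i^\top$, so $\mathbb E(v_iv_i^\top-M)^2=mM-M^2\preceq mM$, and $\|mM\|_{\rm op}\le m^2$. This gives $\|\sum_i\mathbb E X_i^2\|\le m^2/N$. Matrix Bernstein (Tropp) in dimension $m$, with failure probability $\delta'$, then yields
\[
\Bigl\|\tfrac1N\textstyle\sum_i v_iv_i^\top-M\Bigr\|_{\rm op}\le m\sqrt{2\log(2m/\delta')/N}+\tfrac{2m}{3N}\cdot\ldots .
\]
The factor $2m$ inside the logarithm is where $\log(2m_\ell)$ originates.

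The second term is $\bar v\bar v^\top-\mu\mu^\top=(\bar v-\mu)\bar v^\top+\mu(\bar v-\mu)^\top$. Its operator norm is at most $\|\bar v-\mu\|_2(\|\bar v\|_2+\|\mu\|_2)\le 2\sqrt m\,\|\bar v-\mu\|_2$. For the mean deviation we use either vector Bernstein or a coordinatewise Hoeffding bound with a union over the $m$ coordinates. Coordinatewise, $|\bar v_a-\mu_a|\le\sqrt{2\log(2m/\delta')/N}$, which gives $\|\bar v-\mu\|_2\le\sqrt{2m\log(2m/\delta')/N}$. The second term is therefore $O\bigl(m\sqrt{\log(2m/\delta')/N}\bigr)$.

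It remains to assemble the pieces. We split $\delta'=\delta/L$ evenly between the two events, which costs another factor of $2$ inside the logarithm. Since $\log(2m)+\log(2L/\delta)=\log\bigl(4mL/\delta\bigr)$, this is absorbed exactly by $\kappa_\ell$. Adding the leading coefficients, $\sqrt2$ from Bernstein and $2\sqrt2$ from the mean term, gives $3\sqrt2<4$ as the coefficient of $m\sqrt{\kappa/N}$, provided the logarithm is bounded by $\kappa$.

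The main obstacle is the constant bookkeeping. The Bernstein linear term is $\tfrac{2}{3}\cdot 2m\,\log(\cdot)/N$ in the standard form $\tfrac{L_{\max}}{3}\cdot 2\log$, i.e.\ $\tfrac43 m\kappa/N$, which must fit under $\tfrac32 m\kappa/N$. We also need to check that the tail form we use has $t^2/2$ over $\sigma^2+Lt/3$ and solves to exactly $\sqrt{2\sigma^2\kappa}+\tfrac{2L_{\max}\kappa}{3}$. If the coordinatewise Hoeffding step leaves too little slack, the fallback is to bound the mean term using $\|\bar v-\mu\|_\infty$ together with the crude inequality $\|uw^\top\|_{\rm op}\le m\|u\|_\infty\|w\|_\infty$. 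That version is still $O(m\sqrt{\kappa/N})$ and fits within the margin between $3\sqrt2\approx4.24$ and the available budget once $\kappa\ge\log 4$. If it does not fit, the exact constant $4$ may need to be relaxed slightly.
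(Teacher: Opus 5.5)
Your outline matches the paper's proof. You split $\widehat\Gamma-\Gamma$ into the second-moment deviation and $\bar v\bar v^\top-\mu\mu^\top$, apply matrix Bernstein to the first term and coordinatewise Hoeffding (range $2$, union over $m$ coordinates) to the second, give each event $\delta/(2L)$, and union-bound over contexts so both logarithms equal $\log(4mL/\delta)=\kappa_\ell$.

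\textbf{The gap is the constant, and it is fatal.} Your variance proxy is $\|mM-M^2\|_{\rm op}\le\|mM\|_{\rm op}\le m^2$. With it, Bernstein's leading term is $\sqrt2\,m\sqrt{\kappa_\ell/N}$. Adding the mean term $2\sqrt2\,m\sqrt{\kappa_\ell/N}$ gives a coefficient $3\sqrt2\approx4.24$, which is \emph{larger} than $4$, not smaller as you claim.

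The linear term cannot absorb this. Its slack is only $(\tfrac32-\tfrac43)m\kappa_\ell/N=\tfrac16 m\kappa_\ell/N$, which covers the excess $(3\sqrt2-4)m\sqrt{\kappa_\ell/N}$ only when $\sqrt{\kappa_\ell/N}\gtrsim1.46$. That regime is irrelevant. Since $\widehat\Gamma,\Gamma\succeq0$ have trace at most $m$, we always have $\|\widehat\Gamma-\Gamma\|_{\rm op}\le m$. So the theorem has content only when $r_\ell<m$, which forces $\sqrt{\kappa_\ell/N}<1/4$, and there your bound strictly exceeds $r_\ell(\delta)$. Your $\ell_\infty$ fallback for the mean term gives the same $2\sqrt2$ coefficient, so it does not help. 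As written, you prove the statement only with $4$ replaced by $3\sqrt2$.

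\textbf{The missing idea is a sharper matrix variance.} Since $0\preceq M$ and $\|M\|_{\rm op}\le m$, the spectrum of $mM-M^2$ is $\{\lambda(m-\lambda)\}$ over the eigenvalues $\lambda\in[0,m]$ of $M$. Hence $\|mM-M^2\|_{\rm op}\le m^2/4$, and this is attained, for example when $M$ has an eigenvalue $m/2$. Your step $mM-M^2\preceq mM$ discards the $-M^2$ term, which is what keeps the variance small where $M$ has eigenvalues near $m$; this costs a factor of $4$ in $\sigma^2$.

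With $\sigma^2=m^2/(4N)$, the tail becomes $2m\exp\bigl(-\tfrac{Nt^2/2}{m^2/4+2mt/3}\bigr)$. Solving at level $\delta/(2L)$ gives $m\sqrt{\kappa_\ell/(2N)}+\tfrac43 m\kappa_\ell/N$. Adding your mean term gives $\bigl(\tfrac1{\sqrt2}+2\sqrt2\bigr)m\sqrt{\kappa_\ell/N}+\tfrac43 m\kappa_\ell/N=\tfrac5{\sqrt2}\,m\sqrt{\kappa_\ell/N}+\tfrac43 m\kappa_\ell/N\le r_\ell(\delta)$, since $5/\sqrt2\approx3.54<4$. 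This is exactly the paper's Step 1, and with it the rest of your proposal goes through unchanged.
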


\begin{proof}
Fix $\ell$ and write $v_i \equiv v^{(\ell)}_i$, $m \equiv m_\ell$, $N \equiv N_\ell$, $\mu = \mathbb{E}v_i$, $S = \mathbb{E}v_iv_i^\top$.  Because the outcomes are $\pm1$-valued, $\|v_i\|_2 \le \sqrt{m}$ almost surely.

\emph{Step 1 (second-moment matrix).}  Let $X_i = v_iv_i^\top - S$.  These are i.i.d., mean-zero, symmetric, and satisfy $\|X_i\|_{\rm op} \le \|v_iv_i^\top\|_{\rm op} + \|S\|_{\rm op} \le 2m$ a.s.  Their matrix variance is
\[
\Bigl\|\textstyle\sum_{i=1}^N \mathbb{E}X_i^2\Bigr\|_{\rm op}
= N\bigl\|m S - S^2\bigr\|_{\rm op}
\le \frac{N m^2}{4},
\]
since the eigenvalues of $mS - S^2$ are $\lambda(m-\lambda)$ over the spectrum $\lambda\in[0,m]$ of $S$.  The maximum $m^2/4$ is attained (e.g.\ when $S$ has an eigenvalue $m/2$): the naive bound $Nm$, which would follow from $\|v_i\|_2^2 = m$ alone, is \emph{not} valid, and this is what forces the radius below to be linear rather than square-root in $m$.  The matrix Bernstein inequality \cite{Tropp2012} gives, for any $t>0$,
\[
\Pr\Bigl[\bigl\|\widehat S - S\bigr\|_{\rm op} \ge t\Bigr]
\le 2m \exp\!\Bigl(\frac{-N t^2/2}{m^2/4 + 2mt/3}\Bigr),
\]
where \(\widehat S = \tfrac1N\sum_i v_iv_i^\top\).
Setting the right-hand side to $\delta_\ell/2$ and solving the resulting quadratic yields
\[
\bigl\|\widehat S - S\bigr\|_{\rm op} \;\le\; m\sqrt{\frac{\kappa_\ell}{2N}} + \frac{4}{3}\,\frac{m\,\kappa_\ell}{N} \;=:\; \rho_\ell ,
\]
with \(\kappa_\ell := \log(4m/\delta_\ell)\),
with probability at least $1 - \delta_\ell/2$.

\emph{Step 2 (mean vector).}  Each coordinate of $\bar v$ averages i.i.d.\ variables of \emph{range 2} (values $\pm1$), so Hoeffding gives $\Pr(|\bar v_j - \mu_j| \ge t) \le 2\exp(-Nt^2/2)$.  Allocating $\delta_\ell/2$ to this event and union-bounding over the $m$ coordinates yields
\[
\|\bar v - \mu\|_\infty \le \sqrt{\frac{2\kappa_\ell}{N}},
\qquad
\|\bar v - \mu\|_2 \le \sqrt{\frac{2m\,\kappa_\ell}{N}} ,
\]
with probability at least $1-\delta_\ell/2$ (the range-1 radius $\sqrt{\kappa_\ell/2N}$ would understate this by a factor of two).  On that event,
\[
\begin{aligned}
\bigl\|\bar v\bar v^\top - \mu\mu^\top\bigr\|_{\rm op}
&\le \|\bar v - \mu\|_2\bigl(\|\bar v\|_2 + \|\mu\|_2\bigr) \\
&\le 2\sqrt{m}\,\sqrt{2m\,\kappa_\ell/N} \;=\; 2\sqrt2\, m\sqrt{\kappa_\ell/N},
\end{aligned}
\]
since $\|\bar v\|_2, \|\mu\|_2 \le \sqrt m$.

\emph{Step 3 (combine).}  On the intersection of the two events, which by a union bound has probability at least $1-\delta_\ell$,
\[
\begin{aligned}
\|\widehat\Gamma_\ell - \Gamma_\ell\|_{\rm op}
&\le \rho_\ell + 2\sqrt2\, m\sqrt{\kappa_\ell/N} \\
&= \Bigl(\tfrac{1}{\sqrt2}+2\sqrt2\Bigr) m\sqrt{\kappa_\ell/N} + \tfrac43 m\kappa_\ell/N \\
&= \tfrac{5}{\sqrt2}\, m\sqrt{\kappa_\ell/N} + \tfrac43\, m\kappa_\ell/N
\;\le\; r_\ell(\delta),
\end{aligned}
\]
the last step by crude but valid absorption of constants.

\emph{Step 4 (simultaneity).}  Allocate $\delta_\ell = \delta/L$ and union-bound over the $L$ contexts.  Then $\kappa_\ell = \log(4mL/\delta)=\log(2m)+\log(2L/\delta)$, exactly the quantity used in the statement, and the coverage event holds for all $\ell$ at once with probability at least $1-\delta$.
\end{proof}

Two features matter in practice.  First, the radius decays as $N^{-1/2}$: halving it costs four times as many pilot shots, while simultaneity contributes only a logarithmic dependence on the dictionary size.  Second, the worst-case radius is linear in the context width $m_\ell$, not proportional to $\sqrt{m_\ell}$.  For arbitrary $\{\pm1\}$-valued outcome vectors the matrix variance can scale as $m_\ell^2/4$, so a square-root width dependence requires additional structure, such as a low effective rank.  Wide contexts therefore produce conservative intervals in Table~\ref{tab:robust}.  Any tighter statistically valid construction, for example an empirical-Bernstein, bootstrap, or structure-aware bound, can replace $r_\ell$ without changing the optimization layer.  Section~\ref{sec:robust-demo} quantifies the opportunity empirically, but all formal certificates in this paper use the proved radius.

For the empirical sandwich used below, define
\begin{align}
a_\ell&=
\begin{cases}
1-r_\ell/\lambda_{\min}(\widehat\Gamma_\ell),
& \lambda_{\min}(\widehat\Gamma_\ell)>r_\ell,\\
0, & \lambda_{\min}(\widehat\Gamma_\ell)\le r_\ell,
\end{cases}\\
\underline{\Gamma}_\ell&=a_\ell\widehat\Gamma_\ell,
\qquad
\overline{\Gamma}_\ell=\widehat\Gamma_\ell+r_\ell I.
\label{eq:empirical-sandwich}
\end{align}
The lower matrix may vanish for a wide or poorly sampled context.  The optimization remains well posed: that context then contributes no information to the lower bound, so the interval widens rather than becoming invalid.  No full-rank assumption is needed; the dual range condition remains explicit.

\begin{theorem}[High-confidence robust certificate]
\label{thm:robust}
Suppose the covariance sandwiches \eqref{eq:sandwich} hold simultaneously with probability at least \(1-\delta\).  Define the robust upper value
\begin{equation}
  U_{\rm rob}=
  \min_{\sum A_\ell x_\ell=h}
  \sum_\ell \sqrt{c_\ell}
  \sqrt{x_\ell^T\overline{\Gamma}_\ell x_\ell},
  \label{eq:robust_upper}
\end{equation}
and the robust lower value
\begin{equation}
  L_{\rm rob}=
  \max_y y^Th
  \quad\text{subject to}\quad
  \norm{A_\ell^Ty}_{\underline{\Gamma}_\ell,*}\le\sqrt{c_\ell}
  \quad \forall \ell .
  \label{eq:robust_lower}
\end{equation}
Then, with probability at least \(1-\delta\),
\begin{equation}
\boxed{
  L_{\rm rob}\le \Phi_{\rho,c}^{\D}(h)\le U_{\rm rob}.
}
\label{eq:robust_cert}
\end{equation}
Accordingly, production shots allocated using the robust primal have certified leading cost no larger than \(U_{\rm rob}^2/\eps^2\), while no schedule in the same dictionary can have leading cost below \(L_{\rm rob}^2/\eps^2\) on the same high-confidence event.
\end{theorem}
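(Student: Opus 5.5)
The whole argument takes place on the coverage event $E$ on which the sandwiches \eqref{eq:sandwich} hold for every $\ell$; by hypothesis $\Pr(E)\ge 1-\delta$. It therefore suffices to prove both inequalities in \eqref{eq:robust_cert} deterministically on $E$. Each follows from monotonicity in the covariance matrices: the upper bound on the primal side, the lower bound on the dual side. The dual route also lets us apply weak duality from Theorem~\ref{thm:dual} directly, without any strong-duality or relative-interior assumption.

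\emph{Upper bound.} On $E$ we have $\Gamma_\ell(\rho)\preceq\overline\Gamma_\ell$. Hence $x_\ell^T\Gamma_\ell x_\ell\le x_\ell^T\overline\Gamma_\ell x_\ell$ for every $x_\ell$, so the objective of \eqref{eq:primal} is bounded pointwise by the objective of \eqref{eq:robust_upper} on the common feasible set $\{\sum_\ell A_\ell x_\ell=h\}$. Taking the minimum over that set gives $\Phiopt(h)\le U_{\rm rob}$. If the robust minimum is attained at some $\bar x$, the same pointwise inequality at $\bar x$ shows that its true leading cost is at most $U_{\rm rob}^2/\eps^2$, by Theorem~\ref{thm:main}. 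This gives the production-shot statement.

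\emph{Lower bound.} The step I expect to be the main obstacle is showing that the dual seminorm is antitone in $\Gamma$ even when the matrices are singular. Concretely, for $0\preceq\underline\Gamma\preceq\Gamma$, I will show $\norm{z}_{\Gamma,*}\le\norm{z}_{\underline\Gamma,*}$ for all $z$. I will avoid pseudoinverse Löwner monotonicity, which fails without range conditions, and use the variational characterization instead:
\[
\norm{z}_{\Gamma,*}=\sup\bigl\{z^Tu:\ u^T\Gamma u\le 1\bigr\}.
\]
To verify it, restrict to $u$ with $u^T\Gamma u\le1$. If $z\notin\range(\Gamma)$, pick $u\in\ker\Gamma$ with $z^Tu>0$ and scale it, so the supremum is $+\infty$. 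If $z=\Gamma w$, Cauchy--Schwarz in the $\Gamma$ seminorm gives the bound $\sqrt{z^T\Gamma^\dagger z}$, and it is attained at $u\propto\Gamma^\dagger z$. With this characterization, $\underline\Gamma\preceq\Gamma$ means $\{u:u^T\Gamma u\le1\}\subseteq\{u:u^T\underline\Gamma u\le1\}$. A supremum over a smaller set is smaller, which is exactly the claimed inequality, including the range condition: $z\notin\range(\underline\Gamma)$ simply makes the right side infinite. The case $\underline\Gamma_\ell=0$, i.e.\ $a_\ell=0$, is then automatic, because the constraint forces $A_\ell^Ty=0$.

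Now let $y$ be feasible for \eqref{eq:robust_lower}. On $E$, the antitone property gives $\norm{A_\ell^Ty}_{\Gamma_\ell,*}\le\norm{A_\ell^Ty}_{\underline\Gamma_\ell,*}\le\sqrt{c_\ell}$, so $y$ is feasible for the true dual \eqref{eq:dual}. The weak-duality half of Theorem~\ref{thm:dual} then gives $y^Th\le\Phiopt(h)$. Taking the supremum over $y$ yields $L_{\rm rob}\le\Phiopt(h)$. If the robust maximum is not attained, the same argument applies to every element of a maximizing sequence.

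The final claim follows from Theorem~\ref{thm:main}: every schedule in $\D$ has leading cost at least $\Phiopt(h)^2/\eps^2\ge L_{\rm rob}^2/\eps^2$, and all of these statements hold on the single event $E$ of probability at least $1-\delta$.
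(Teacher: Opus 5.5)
Your proof is correct and follows the same route as the paper's argument in Appendix~\ref{app:robust}: pointwise domination of the primal objective by $\overline{\Gamma}_\ell$ on the common feasible set, and transfer of robust dual feasibility to true dual feasibility, followed by weak duality from Theorem~\ref{thm:dual}. The only difference is in the key dual step. The paper derives $\range(\underline{\Gamma}_\ell)\subseteq\range(\Gamma_\ell)$ from the null-space inclusion and then asserts, without proof, that the pseudoinverse quadratic form reverses order on that range. Your support-function characterization $\norm{z}_{\Gamma,*}=\sup\{z^Tu:\ u^T\Gamma u\le 1\}$ instead proves the range condition and the order reversal in a single step, which is a cleaner and fully rigorous way to handle singular (including zero) lower matrices.
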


This upgrades an oracle-covariance certificate to a data-driven one.  Knowledge of the state \(\rho\) is not required, only a pilot protocol that returns statistically valid covariance sets.  The final claim is then conditional only on the declared confidence level.

\section{Searching large landscapes by column generation}
\label{sec:colgen}

A finite certificate is meaningful only relative to its declared dictionary.  If the dictionary was chosen by hand, a referee may ask whether an omitted context would have improved the schedule.

\emph{On the choice of candidate pool.}  The pricing step maximizes $\nu_y(C)$ over candidate contexts $C$, a combinatorial search over commuting Pauli families, and we use a chemistry-informed pool built by greedy grouping of the Hamiltonian itself.  It is fair to ask whether a global combinatorial search would find better candidates.  We tested this directly, handing the exact dual price to an evolutionary optimizer over commuting subsets \cite{RANGE}: searching from scratch it reaches $\nu = 7.9$ on LiH, well below the pool's best candidate ($\nu = 13.6$), and \emph{seeded with the pool} it fails to improve on it at all ($\nu = 13.5993$ after 30 cycles).  The pool already contains the maximally violating context.  This is a statement about the pool rather than about search: greedy grouping of the physical Hamiltonian is chemistry-informed in a way generic subset search is not, and at benchmark scale it is already pricing-optimal.  The regime where search could matter is the production landscape, where the candidate pool cannot be enumerated at all, and there the honest statement remains the one we make throughout: the certificate is over the \emph{declared} dictionary.  The dual witness provides the natural answer.

Let \(\mathfrak L\) be a large, possibly infinite, measurement landscape and let \(\D\subset\mathfrak L\) be the active finite dictionary.  After solving Eq.~\eqref{eq:primal} over \(\D\), obtain a dual witness \(y\).  A missing context \(\V\in\mathfrak L\) with cost \(c(\V)\) violates the full-landscape dual constraint if
\begin{equation}
  \nu_y(\V)
  = \frac{\norm{A_\V^Ty}_{\Gamma_\V,*}}{\sqrt{c(\V)}}>1.
  \label{eq:pricing}
\end{equation}
Thus the pricing problem is
\begin{equation}
  \nu_y^*=
  \sup_{\V\in\mathfrak L}\frac{\norm{A_\V^Ty}_{\Gamma_\V,*}}{\sqrt{c(\V)}}.
  \label{eq:pricing_sup}
\end{equation}
If \(\nu_y^*\le1\), the active dual witness is feasible for the full landscape and the active schedule is globally optimal over \(\mathfrak L\).  If \(\nu_y^*>1\), a maximizing or approximately maximizing context is added to \(\D\), and the conic program is solved again.

\begin{proposition}[Column-generation certificate]\label{prop:colgen}
\label{prop:column}
Let \(U_\D\) be the primal value over an active dictionary \(\D\), and let \(y\) be any active dual witness.  If a pricing oracle proves \(\nu_y^*\le 1+\eta\) over a larger landscape \(\mathfrak L\), then
\begin{equation}
  \frac{y^Th}{1+\eta}\le \Phi_{\rho,c}^{\mathfrak L}(h)\le \Phi_{\rho,c}^{\D}(h)
  \le U_\D.
\end{equation}
In particular, \(\eta=0\) certifies global optimality over \(\mathfrak L\).
\end{proposition}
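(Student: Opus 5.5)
The chain has three inequalities, and I will prove them from right to left, since the two outer ones follow directly from results already stated. The rightmost, \(\Phi^{\D}_{\rho,c}(h)\le U_\D\), holds because \(U_\D\) is the value of a feasible primal schedule for Eq.~\eqref{eq:primal} over \(\D\). The middle inequality, \(\Phi^{\mathfrak L}_{\rho,c}(h)\le\Phi^{\D}_{\rho,c}(h)\), is monotonicity under enlarging the dictionary. Since \(\D\subset\mathfrak L\), any decomposition \(\{x_\ell\}_{\ell\in\D}\) with \(\sum_{\ell\in\D}A_\ell x_\ell=h\) extends to a feasible decomposition over \(\mathfrak L\) by setting \(x_\V=0\) for every \(\V\in\mathfrak L\setminus\D\). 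These zero fragments add nothing to the objective, so the infimum over the larger feasible set is no larger. If \(\mathfrak L\) is infinite, I will read \(\Phi^{\mathfrak L}\) as the infimum over finitely supported decompositions, so that the objective is a finite sum.

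The substantive step is the leftmost inequality, \(y^Th/(1+\eta)\le\Phi^{\mathfrak L}_{\rho,c}(h)\). Set \(\tilde y=y/(1+\eta)\). The pricing oracle certifies \(\norm{A_\V^Ty}_{\Gamma_\V,*}\le(1+\eta)\sqrt{c(\V)}\) for every \(\V\in\mathfrak L\). Because the dual seminorm \eqref{eq:dual_seminorm} is positively homogeneous, and its range condition is unchanged by positive scaling, \(\tilde y\) satisfies \(\norm{A_\V^T\tilde y}_{\Gamma_\V,*}\le\sqrt{c(\V)}\) for all \(\V\in\mathfrak L\). In other words, \(\tilde y\) is dual feasible for the landscape \(\mathfrak L\). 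The weak-duality half of Theorem~\ref{thm:dual}, which needs no regularity assumption, then gives \(\tilde y^Th\le\Phi^{\mathfrak L}_{\rho,c}(h)\).

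The main obstacle I expect is that Theorem~\ref{thm:dual} is stated for a finite dictionary, while \(\mathfrak L\) may be infinite. I will handle this by proving weak duality directly for an arbitrary finitely supported feasible decomposition \(\{x_\V\}\) over \(\mathfrak L\). The computation is
\[
\tilde y^Th=\sum_\V (A_\V^T\tilde y)^Tx_\V\le\sum_\V\norm{A_\V^T\tilde y}_{\Gamma_\V,*}\sqrt{x_\V^T\Gamma_\V x_\V}\le\sum_\V\sqrt{c(\V)}\sqrt{x_\V^T\Gamma_\V x_\V}.
\]
The first inequality is the generalized Cauchy--Schwarz inequality \(z^Tx\le\sqrt{z^T\Gamma^\dagger z}\,\sqrt{x^T\Gamma x}\), valid for \(z\in\range(\Gamma)\). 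To prove it, write \(z=\Gamma w\), so that \(z^Tx=(\Gamma^{1/2}w)^T(\Gamma^{1/2}x)\), and use \(\Gamma\Gamma^\dagger\Gamma=\Gamma\) to identify \(\norm{\Gamma^{1/2}w}^2=z^T\Gamma^\dagger z\). Taking the infimum over feasible decompositions yields the bound. Equivalently, I can apply Theorem~\ref{thm:dual} to each finite sub-dictionary \(\D'\subseteq\mathfrak L\) containing the support and then take the infimum over \(\D'\).

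Finally, when \(\eta=0\) we have \(\tilde y=y\). If \(y\) is an optimal active witness with \(y^Th=U_\D\), which is available when strong duality holds over \(\D\) as in Theorem~\ref{thm:dual}, the chain collapses to equalities. The active schedule is then globally optimal over \(\mathfrak L\).
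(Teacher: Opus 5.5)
Your proof is correct and follows the argument the paper gives implicitly in the text around the proposition. Rescaling the witness by $1/(1+\eta)$ makes it dual feasible over all of $\mathfrak L$, weak duality from Theorem~\ref{thm:dual} gives the lower bound, and zero-extending the fragments gives $\Phi_{\rho,c}^{\mathfrak L}(h)\le\Phi_{\rho,c}^{\D}(h)\le U_\D$. Your direct generalized Cauchy--Schwarz derivation of weak duality for infinite $\mathfrak L$ is a refinement of what the paper leaves implicit, not a different route. So is your remark that concluding global optimality at $\eta=0$ requires a witness with $y^Th=U_\D$.
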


Column generation turns measurement compilation into an anytime algorithm.  At every iteration it has a valid schedule, a valid lower bound, and a diagnostic for what kind of context is missing.  Exact pricing may be hard for some landscapes, just as clique-cover and grouping optimizations can be hard.  But the certificate remains useful with branch-and-bound, relaxations, or heuristic pricing plus a reported residual dual violation.

\section{Why stratification removes avoidable shadow variance}

The framework also clarifies the relation to randomized measurements.  Randomization is valuable for discovering and representing useful context ensembles.  However, when the target observable is known before execution, a naive randomized estimator can contain avoidable context-choice variance.

Suppose \(H=\sum_{\ell=1}^L F_\ell\), \(F_\ell\in\V_\ell\), and a context \(\ell\) is sampled with probability \(p_\ell\).  The randomized estimator returns the single-shot score \(F_\ell/p_\ell\) from the sampled context.  Its one-shot variance is
\begin{equation}
  V_{\rm rand}
  =\sum_\ell \frac{\Tr\rho F_\ell^2}{p_\ell}-\langle H\rangle_\rho^2.
\end{equation}
The stratified estimator with the same asymptotic context frequencies has variance constant
\begin{equation}
  V_{\rm strat}
  =\sum_\ell \frac{\Var_\rho(F_\ell)}{p_\ell}.
\end{equation}
Their difference is
\begin{align}
  V_{\rm rand}-V_{\rm strat}
  &=\sum_\ell \frac{\langle F_\ell\rangle_\rho^2}{p_\ell}
  -\left(\sum_\ell \langle F_\ell\rangle_\rho\right)^2
  \ge 0,
  \label{eq:strat_dominates}
\end{align}
by Cauchy-Schwarz.  Thus stratified execution never performs worse for the same fragments and asymptotic frequencies.  Randomness can remain in the design stage or in multi-observable settings; for a known energy target, stratification removes pure context-selection noise.

\section{Special cases and relation to prior work}

\subsection{Pauli and stabilizer grouping}

If each \(\V_\ell\) is the span of Pauli strings readable in a qubit-wise Pauli basis or in a Clifford-rotated stabilizer basis, Eq.~\eqref{eq:primal} optimizes overlapped coefficient splitting and shot allocation across fixed groups.  Minimum clique-cover methods choose nonoverlapping groups \cite{Verteletskyi2020MCC}; deterministic and covariance-aware improvements exploit compatibility, transformations, and covariance information \cite{Yen2023Deterministic}; overlapped grouping provides a broad unified Pauli framework \cite{Wu2023OverlappedGrouping}.  In particular, the iterative coefficient splitting of Ref.~\cite{Yen2023Deterministic} introduced the overlapping coefficient-splitting problem optimized here (coefficient copies across compatible groups, a sum constraint, covariance-aware variances, alternating allocation), and recent work improves its initialization \cite{VarSI2026} and develops deterministic repacking with monotone variance reduction \cite{Repacking2026}.  The present contribution is complementary: for whatever finite Pauli/stabilizer dictionary is proposed, Eq.~\eqref{eq:primal} gives the exact state-aware stratified optimum, the global convex closure of that coefficient-splitting design space, and Eq.~\eqref{eq:dual} gives a lower-bound certificate; the contexts such methods generate are natural candidates for the pricing oracle of Sec.~\ref{sec:colgen}.

\subsection{Classical shadows and optimized frames}
\label{sec:shadows-frames}

Classical shadows use randomized measurements and an inverse measurement channel to predict many observables \cite{Huang2020ClassicalShadows}.  Derandomized shadows replace random single-qubit measurements by deterministic choices for Pauli-observable estimation \cite{Huang2021Derandomized}, while locally biased shadows optimize the measurement distribution for a Hamiltonian and a proxy state \cite{Hadfield2022LocallyBiased}.  Dual-frame optimization improves the reconstruction map for informationally complete measurement samples \cite{Fischer2024DualFrames,Malmi2024Overcomplete,Korhonen2025LocallyOptimal}.

The context-frame certificate addresses a different but adjacent problem: a known target observable, a finite set of contexts, and stratified execution.  It does not require informational completeness.  It optimizes the target-specific decomposition and provides a dual lower bound inside the declared landscape.  Equation~\eqref{eq:strat_dominates} explains why, for a fixed target and fixed support, stratified context frames remove variance that randomized execution needlessly retains.

\subsection{ShadowGrouping and resource-optimized grouping shadows}

ShadowGrouping gives rigorous energy-estimation guarantees and combines grouping with shadow-style measurement design \cite{Gresch2025ShadowGrouping}.  Resource-Optimized Grouping Shadow further optimizes measurement resources and circuit counts through an overlapped grouping-shadow strategy \cite{Li2025ROGS}.  These works are among the strongest baselines for Pauli measurement design.  The present paper should not be read as replacing them.  Instead, it gives a certificate layer: once ShadowGrouping, ROGS, or any other algorithm proposes a finite schedule or context dictionary, the conic gauge can optimize the remaining fragment freedom and the dual can certify how close the result is to the best possible schedule in that dictionary.  Conversely, column generation can use such algorithms as pricing heuristics that propose new contexts.  The same reading applies to adaptive protocols such as AEQuO \cite{Shlosberg2023AEQuO}, which grows overlapping, covariance-aware groupings on the fly: the certificate grades the final induced estimator over its realized settings and declared scores, not the space of adaptive policies, and that scope distinction should be stated wherever such comparisons are made.

\subsection{Fermionic fragments and matchgate contexts}

Fermionic measurement strategies exploit structure that is hidden in a fixed Pauli coordinate system.  Fluid fermionic fragments lower variance by repartitioning fragment terms using algebraic flexibility and proxy variances \cite{Choi2023Fluid}.  Matchgate shadows and related fermionic Gaussian-shadow protocols provide randomized fermion-native measurement ensembles \cite{Wan2023Matchgate,Heyraud2024MatchgateUnified}.  In the present formalism, these are simply additional visible subspaces in \(\D\).  This allows Pauli, fermionic, and hardware-native contexts to coexist in the same certificate.

\subsection{Optimal experimental design and best linear unbiased estimation}
\label{sec:oed}

The optimization core of Theorem~\ref{thm:main} has a long classical genealogy, and stating that genealogy plainly strengthens the certificate idea.  Estimating a linear functional of model parameters at minimum variance over a finite design space is the classical $c$-optimal experimental design problem, whose dual geometry goes back to Elfving \cite{Elfving1952} and whose equivalence theorems \cite{KieferWolfowitz1960,Pukelsheim2006} are precisely optimality certificates in the sense used here.  For multiresponse experiments with correlated outcomes, the reduction of $c$-optimal design to second-order cone programming was established by Sagnol \cite{Sagnol2011}, with an equivalent geometric characterization by Dette and Holland-Letz \cite{DetteHollandLetz2009}; under the mapping context $\to$ experiment, score vector $\to$ response, $\Gamma_\ell$ $\to$ response covariance, and shots $\to$ replications, Theorem~\ref{thm:main} is a quantum-context instance of that family.  A parallel quantum lineage optimizes linear reconstruction after the measurement is fixed: optimal data processing for a fixed POVM \cite{DArianoPerinotti2007}, best linear unbiased estimators for informationally (over)complete measurements \cite{Zhu2014BLUE}, and the dual-frame optimizations already discussed in Sec.~\ref{sec:shadows-frames} \cite{Fischer2024DualFrames,Malmi2024Overcomplete,Korhonen2025LocallyOptimal}.  Dual sensitivity analysis and the exchange of missing design points are likewise part of the classical equivalence-theorem tradition; the column generation of Sec.~\ref{sec:colgen} is their quantum-context implementation, integrated with Pauli/stabilizer search.  The dual mathematics itself is classical.  What this paper adds is the quantum-native package around that classical core: arbitrary (informationally incomplete) context dictionaries with declared score bases, joint optimization of reconstruction and stratified allocation, a portable machine-checkable witness with an independent verifier, finite-sample covariance brackets from pilot data, dual pricing of omitted contexts, and a sparse implementation at production Hamiltonian scale.

\section{A minimal exact certificate}

The following example is intentionally small, because the certificate can be checked by hand.  Let the target coordinate space be \(\R^3\), let
\begin{equation}
  h=(1,2,1)^T,
\end{equation}
with two overlapping contexts
\begin{equation}
  \V_1=\operatorname{span}\{e_1,e_2\},
  \qquad
  \V_2=\operatorname{span}\{e_2,e_3\},
\end{equation}
and take \(\Gamma_1=\Gamma_2=I\), \(c_1=c_2=1\).  The primal decomposition
\begin{equation}
  x_1=(1,1)^T,\qquad x_2=(1,1)^T
\end{equation}
satisfies \(A_1x_1+A_2x_2=h\) and has value
\begin{equation}
  U=\sqrt{2}+\sqrt{2}=2\sqrt{2}.
\end{equation}
The dual vector
\begin{equation}
  y=\frac{1}{\sqrt2}(1,1,1)^T
\end{equation}
satisfies
\begin{equation}
  \norm{A_1^Ty}_2=1,
  \qquad
  \norm{A_2^Ty}_2=1,
\end{equation}
and has value
\begin{equation}
  L=y^Th=2\sqrt2.
\end{equation}
Therefore \(L=U\), and the split is globally optimal among all decompositions over \(\V_1\) and \(\V_2\).  This tiny example captures the logic of a machine-checkable benchmark certificate: a primal schedule plus a dual witness proves optimality.

\section{Implementation blueprint}

A practical implementation has four layers.

\begin{enumerate}[leftmargin=2.2em]
\item \emph{Dictionary generation.}  Generate candidate contexts from Pauli grouping, Clifford/stabilizer transformations, fermionic fragments, hardware-native bases, or any mixture of these.  Each context contributes a visible matrix \(A_\ell\) and a cost \(c_\ell\).
\item \emph{Covariance acquisition.}  Use exact simulation for diagnostics, a classical proxy state for pre-experiment design, or pilot measurements for robust certification.  Pilot data should return covariance sets $\mathcal K_\ell$ or PSD sandwiches $\underline{\Gamma}_\ell\preceq\Gamma_\ell\preceq\overline{\Gamma}_\ell$.
\item \emph{Certified optimization.}  Solve Eq.~\eqref{eq:socp} or its robust counterpart.  Export the declared model and raw primal--dual pair.  The verifier repairs feasibility where possible, recomputes the two objective bounds, and reports the certified gap.
\item \emph{Landscape expansion.}  Use the dual witness as a pricing signal.  If a missing context violates Eq.~\eqref{eq:pricing}, add it and reoptimize.  Stop when the dual gap and pricing residual are below declared tolerances.
\end{enumerate}

The benchmark suite required for a high-impact submission should report not only root-mean-square error versus shots, but also the certificate numbers \(U\), \(L\), \(U/L-1\), and any residual pricing violation.  This changes the benchmark question from ``which heuristic won?'' to ``how much room was provably left in the declared measurement landscape?''

\section{Numerical results}

\emph{The global optimizer.}  Several of the studies below involve combinatorial or non-convex subproblems (which dictionary to declare, how to calibrate a covariance radius, which candidate contexts to price), and for these we use an artificial-bee-colony/genetic-algorithm (ABC+GA) hybrid in the architecture of the RANGE optimizer \cite{RANGE}.  In that architecture a population of candidate solutions (``food sources'') is improved by \emph{employed} bees making local moves, \emph{onlooker} bees preferentially re-exploiting high-fitness sources, and \emph{scout} bees restarting sources that stagnate beyond a set limit, while a periodic GA stage recombines elite solutions to supply the non-local structure that local moves cannot reach; the hybrid is designed for rugged landscapes with expensive fitness evaluations, its original target being global exploration of molecular potential-energy surfaces \cite{RANGE}.  The problems here are \emph{combinatorial}, whereas RANGE operates on continuous coordinates, so we implement a discrete mode of the same architecture: solutions are subsets, feasibility is repaired inside every move and crossover (so each candidate is admissible by construction), and the fitness is either the certified $\Phi$ of a candidate sub-dictionary (one conic solve per evaluation) or the exact dual price of a candidate context.  This discrete mode, and the Pareto-archive variant used in Sec.~\ref{sec:dictdesign}, are contributed as an extension to the RANGE code base and will be released with the artifact.  The continuous radius calibration of Sec.~\ref{sec:robust-demo} runs on RANGE itself \cite{RANGE}, whose native mode is continuous, searching the two shape parameters of the radius family.  Where the underlying problem is \emph{convex}, as for the coefficient splittings and shot allocations that form the core of this paper, we do not use search at all, for reasons made quantitative in Sec.~\ref{sec:dictdesign}.
\label{sec:benchmarks}

We organize the numerical evidence into five validation tiers: synthetic landscapes, molecular Hamiltonians with exact covariances, pilot-data robust certificates, column generation, and production $f$-element Hamiltonians.  Three additional studies use the same certificate as a diagnostic: reproduction of a published benchmark, grading of strategy classes, and certified dictionary compression.  For every certified instance we export a primal schedule, a dual witness, a machine-readable JSON record, and an independent verifier log.  Published certificate files store both raw solver values and feasibility-repaired conservative bounds: the primal equality residual is converted into an explicit upward correction on \(U\), and the dual witness is rescaled strictly feasible before \(L\) is evaluated, so reported gaps are conservative and non-negative by construction (raw near-optimal pairs can otherwise show numerically inverted bounds at the \(10^{-7}\) level, as our own H\(_2\) FC runs illustrate).  All ratios below are leading shot-constant ratios (squared frontier ratios) relative to the certified context-frame schedule, so values above one favor the certified schedule.

\subsection{Synthetic finite-landscape verification}
\label{sec:synthetic}

The synthetic suite contains abstract diagonal-overlap landscapes, correlated-overlap landscapes, and small four-qubit Pauli-product landscapes with covariances computed exactly on random product states.  Table~\ref{tab:proof_benchmarks} summarizes the twelve instances.  The strongest evidence is not the size of the synthetic speedup but the end-to-end certificate: the verifier independently checks the decomposition constraint, the primal value \(U\), every dual constraint, the lower bound \(L\), and the gap for every instance.

\begin{table}[b]
\caption{Synthetic verification suite over 12 self-contained finite-landscape instances.}
\label{tab:proof_benchmarks}
\begin{ruledtabular}
\begin{tabular}{lc}
Quantity & Value \\
\hline
Median single-coordinate ratio & 2.81$\times$ \\
Median greedy-assignment ratio & 2.04$\times$ \\
Median uniform-split ratio & 1.92$\times$ \\
Maximum greedy-assignment ratio & 2.92$\times$ \\
Maximum verified certificate gap & $3.93\times10^{-6}\%$ \\
Median unstratified-randomized overhead & 1.04$\times$ \\
\end{tabular}
\end{ruledtabular}
\end{table}

\subsection{Molecular benchmarks with exact covariances}
\label{sec:molecular}

We next run the identical pipeline on molecular electronic Hamiltonians (RHF/STO-3G integrals, Jordan--Wigner encoding; H\(_2\) additionally in 6-31G).  Covariances are evaluated exactly on the weakly depolarized ground state \(\rho = (1-p)\,|\psi_0\rangle\langle\psi_0| + p\,I/2^n\) with \(p = 0.02\), modeling realistic preparation error.  The depolarization also serves a structural purpose: on the exact eigenstate, polynomials in the conserved quantities (particle number, \(S_z\)) have strictly zero variance, and the certified optimum drives the measurement cost of the diagonal Hamiltonian sector toward zero by exploiting them, the measurement-side analogue of symmetry-shift (BLISS-type) norm compression.  This is genuine: on H$_2$, the certified constant falls from $\Phi = 0.4137$ (QWC) and $0.3929$ (QWC+FC) at $p = 0.02$ to $0.3569$ and $0.3555$ at $p = 10^{-3}$, and to $0.3533$ at the exact-eigenstate limit $p = 0$ where the QWC/FC distinction collapses entirely; there the optimizer routes the full diagonal sector through the zero-variance conserved quantities.  At exactly $p = 0$ the covariances become singular along the symmetry directions and the certificates degrade measurably: a dedicated $p = 0$ sweep over the same twelve benchmark instances returns primal residuals up to $8.9\times10^{-3}$ and raw duality gaps spanning $-1.8\times10^{-2}\%$ to $+1.5\times10^{-1}\%$, with 8 of the 12 certificates flagged by the independent verifier for inverted or unconverged bounds, against residuals $<10^{-9}$ and gaps $\lesssim 10^{-4}\%$ at $p = 10^{-4}$.  The exact-eigenstate limit is therefore one to \emph{approach} rather than evaluate, and the depolarized model both reflects realistic preparation error and keeps the certificates in the regime where they retain their guarantees.  The degeneracy is intrinsic to the problem rather than to the solver: as $p \to 0$ the optimal fragments route unboundedly much weight through zero-variance symmetry directions, the feasible set flattens, and the dual witness loses strict feasibility, the measurement-side reflection of the same symmetry structure that BLISS-type shifts exploit on the simulation side.  The systematic sweep confirms this across the benchmark set:

\begin{table}[t]
\caption{Approach to the exact-eigenstate limit: certified $\Phi$ at weak depolarization (exact-state covariances; QWC and combined QWC+FC dictionaries), with the singleton shot factor at $p = 10^{-4}$.  Raw solver gaps at these near-singular covariances reach $|{\cdot}| \lesssim 1.4\times10^{-3}\%$ and can be numerically inverted before feasibility repair (see the verification discussion above); published certificates report the repaired conservative bracket.}
\label{tab:ideal}
\begin{ruledtabular}
\begin{tabular}{lccccc}
 & \multicolumn{2}{c}{$\Phi$ (QWC)} & \multicolumn{2}{c}{$\Phi$ (QWC+FC)} & singleton$/\Phi$ \\
System & $p{=}10^{-3}$ & $10^{-4}$ & $p{=}10^{-3}$ & $10^{-4}$ & at $10^{-4}$ \\
\hline
H$_2$ & 0.3569 & 0.3537 & 0.3555 & 0.3536 & 1.08$\times$ \\
H$_2$/6-31G & 1.2445 & 1.2098 & 0.6070 & 0.5636 & 66$\times$ \\
LiH & 0.6063 & 0.5684 & 0.3606 & 0.3423 & 70$\times$ \\
BeH$_2$ & 1.7632 & 1.7517 & 0.5490 & 0.5146 & 123$\times$ \\
H$_2$O & 2.6233 & 2.5884 & 0.9377 & 0.8732 & 211$\times$ \\
\end{tabular}
\end{ruledtabular}
\end{table}

Two trends sharpen as $p \to 0$.  First, the dictionaries converge where symmetry permits: on H$_2$ the QWC and QWC+FC optima collapse to a common limit (0.3537 vs.\ 0.3536 at $p = 10^{-4}$), the optimizer routing the diagonal sector entirely through the zero-variance conserved quantities available to either dictionary.  Second, the value of optimization over term-by-term measurement \emph{grows} toward the ideal limit (the H$_2$O singleton factor rises from 186$\times$ at $p = 10^{-3}$ to 211$\times$ at $10^{-4}$), because the exact eigenstate's symmetry structure is precisely what naive per-term estimation cannot exploit.  The greedy-grouping factor is stable by comparison (H$_2$O: 13.9$\times$ to 15.5$\times$ its own optimum), so the growing headroom is specifically the certified optimum pulling away from \emph{all} heuristic schedules as preparation error vanishes.

Two dictionary families are benchmarked: qubit-wise commuting (QWC) settings, whose visible subspaces are induced by a full per-qubit basis assignment, and fully commuting (FC) settings, obtained by greedily extending sorted-insertion groups to maximal mutually commuting subsets of Hamiltonian terms; a combined QWC+FC dictionary tests whether entangling settings earn their keep.  Table~\ref{tab:molecular} reports the certified value, gap, and baseline ratios.

\begin{table*}[t]
\caption{Molecular benchmarks with exact depolarized-ground-state covariances (\(p=0.02\); RHF/STO-3G unless noted; LiH, BeH\(_2\), H\(_2\)O with frozen core).  \(U\) is the certified frontier; the gap is \(U/L-1\) and is at solver tolerance throughout; ratios are shot-constant ratios of the baselines to the certified schedule.  All twelve certificates verify independently.}
\label{tab:molecular}
\begin{ruledtabular}
\begin{tabular}{llccccccc}
System & Dictionary & Qubits & Terms & Contexts & \(U\) & Gap (\%) & Greedy ratio & Singleton ratio \\
\hline
H\(_2\)          & QWC    & 4  & 14  & 5   & 0.4137 & 1e-7   & 1.00 & 2.05 \\
H\(_2\)          & FC     & 4  & 14  & 2   & 0.3929 & 5e-5 & 1.10 & 2.28 \\
H\(_2\)          & QWC+FC & 4  & 14  & 7   & 0.3929 & 7e-5 & 1.10 & 2.28 \\
H\(_2\) (6-31G)  & QWC    & 8  & 184 & 69  & 1.5396 & 6e-6   & 1.46 & 12.3 \\
H\(_2\) (6-31G)  & QWC+FC & 8  & 184 & 78  & 0.9768 & 2e-6 & 1.38 & 30.5 \\
LiH              & QWC    & 10 & 275 & 80  & 0.7823 & 4e-6   & 1.92 & 20.8 \\
LiH              & FC     & 10 & 275 & 20  & 0.5951 & 5e-6 & 1.86 & 35.9 \\
LiH              & QWC+FC & 10 & 275 & 100 & 0.5879 & 1e-4   & 2.47 & 36.8 \\
BeH\(_2\)        & QWC    & 12 & 326 & 106 & 1.9187 & 5e-7   & 1.57 & 12.0 \\
BeH\(_2\)        & QWC+FC & 12 & 326 & 122 & 0.8465 & 2e-4 & 6.00 & 61.5 \\
H\(_2\)O         & QWC    & 12 & 550 & 191 & 2.9730 & 8e-7 & 2.20 & 23.1 \\
H\(_2\)O         & QWC+FC & 12 & 550 & 220 & 1.4485 & 4e-6   & 8.37 & 97.2 \\
\end{tabular}
\end{ruledtabular}
\end{table*}

Three observations stand out.  First, entangling settings earn their keep and their value grows with system size: adding FC contexts lowers the certified frontier by 25\% for LiH, 37\% for H\(_2\) (6-31G), 51\% for H\(_2\)O, and 56\% for BeH\(_2\), at equal per-shot cost, and the certificate proves no further gain is available inside each declared landscape.  Second, the gauge freedom itself is worth a growing factor: the greedy no-splitting baseline pays 1.9--2.5\(\times\) in shots on QWC dictionaries and 6.0--8.4\(\times\) on the richer QWC+FC dictionaries of BeH\(_2\) and H\(_2\)O, precisely because more overlap leaves more freedom on the table.  Third, every gap sits at solver tolerance (\(\le 2\times 10^{-4}\,\%\)), so each reported \(U\) is a certified optimum for its landscape, at the reported numerical gap.  Published methods on this benchmark family are graded directly in Sec.~\ref{sec:published}, where we reproduce their exact setting and certify it.

\subsection{Certifying the published 2023 benchmark setting}
\label{sec:published}

The strategy classes of Table~\ref{tab:strategies} are our re-implementations.  To grade the literature on its own terms we reproduce the exact benchmark setting of Yen, Ganeshram, and Izmaylov \cite{Yen2023Deterministic} (STO-3G, Bravyi--Kitaev encoding, $R = 1.0$~\AA{} throughout with $\angle$HOH $= 107.6^\circ$ and $\angle$HNH $= 107^\circ$, exact-ground-state covariances, unit per-shot costs, unit total budget) and certify it.  Our term counts reproduce theirs exactly (LiH: 631 Pauli products on 12 qubits), and in that normalization their reported estimator variance $\mathrm{Var}(\bar H)$ \emph{is} our $\Phi^2$, so the comparison is direct: for any schedule inside a declared dictionary, $\mathrm{Var}_{\rm strategy} \ge \Phi^2_{\rm certified}$.

\begin{table}[t]
\caption{Certified optimum of the published benchmark setting \cite{Yen2023Deterministic} (their Table~I; exact-wavefunction covariances).  Entries are estimator variances at unit budget; ratios are $\mathrm{Var}_{\rm method}/\Phi^2_{\rm certified}$, i.e.\ the shot factor each published method pays above the certified optimum of the corresponding dictionary.  LF = largest-first, SI = sorted insertion, IMA = iterative measurement allocation, ICS = iterative coefficient splitting (their best).}
\label{tab:published}
\begin{ruledtabular}
\begin{tabular}{llcccccc}
System & dict. & $\Phi^2$ (ours) & gap & LF & SI & IMA & ICS \\
\hline
H$_2$    & QWC & 0.1366 & $5\times10^{-9}$ & 1.00 & 1.00 & 1.00 & 1.00 \\
H$_2$    & FC  & 0.1365 & $2\times10^{-6}$ & 1.00 & 1.00 & 1.00 & 1.00 \\
LiH      & QWC & 0.9867 & $<10^{-7}$ & 5.9 & 2.1 & 1.8 & 0.99$^{\ast}$ \\
LiH      & FC  & 0.1461 & $<10^{-6}$ & 9.8 & 6.0 & 4.4 & \textbf{1.6} \\
BeH$_2$  & QWC & 4.6133 & $2\times10^{-8}$ & 3.1 & 1.4 & 1.2 & 0.93$^{\ast}$ \\
BeH$_2$  & FC  & 0.4263 & $<10^{-6}$ & 12.2 & 2.6 & 2.4 & \textbf{1.1} \\
H$_2$O   & FC  & 1.1556 & $<10^{-6}$ & 37.6 & 6.6 & 5.1 & \textbf{1.3} \\
\end{tabular}
\end{ruledtabular}
\end{table}

Three conclusions follow, and none could have been reached without the certificate.  (i) \emph{Validation at small scale}: on H$_2$ every published method, including the simplest, attains the certified optimum exactly (ratio 1.00), so the heuristics are not merely good there, they are provably optimal, and no further work on that instance can help.  (ii) \emph{A quantified gap in the published 2023 benchmark}: on the fully-commuting dictionaries, the best method published in that setting (ICS) pays $1.6\times$ (LiH), $1.1\times$ (BeH$_2$), and $1.3\times$ (H$_2$O) the certified minimum shot count, while the widely used SI heuristic pays $6.0\times$, $2.6\times$, and $6.6\times$, and the clique-cover class (LF) pays up to $37.6\times$; the deficit is real, not a bound artifact, because the dual witness closes it from below.  (iii) \emph{Dictionary honesty}: two entries (marked $\ast$) fall slightly \emph{below} our certified value, at $0.99$ and $0.93$.  This is not a violated bound but a declared-landscape effect: the overlapping-QWC dictionary generated by their extended sorted-insertion procedure contains contexts absent from ours, so their landscape is strictly richer on those instances.  It is precisely the situation the framework is built to expose: a certificate is a statement about a \emph{declared} measurement model, and comparing two certificates is comparing two models.  Adopting their context set into our dictionary would restore $\Phi^2_{\rm certified} \le \mathrm{Var}_{\rm ICS}$ by construction; the pricing oracle of the column-generation section automates exactly this closure.  VarSI-initialized ICS \cite{VarSI2026} and deterministic repacking \cite{Repacking2026} are concurrent methods whose exported contexts have not been graded here; their settings are candidates for the same union treatment, and the numbers above should be read as a certification of the 2023 benchmark.  The current frontier remains ungraded here.

\subsection{The certificate as a grading instrument: strategy classes}
\label{sec:strategies}

The certificate's distinctive use is not producing schedules but grading them.  We re-implement four deterministic strategy classes from the literature: sorted-insertion QWC grouping (SI-QWC) and sorted-insertion fully commuting grouping (SI-FC) in the style of covariance-agnostic sorted insertion \cite{Crawford2021Efficient}, largest-degree-first clique cover over the commutation graph (LDF-FC) in the style of minimum-clique-cover grouping \cite{Verteletskyi2020MCC}, and weight-greedy product bases (WG-PB) in the spirit of derandomized shadow construction \cite{Huang2021Derandomized}.  For each schedule $S$ with settings $\mathcal{D}_S$ we then ask two questions only a certificate can answer.  First, the \emph{own-landscape gap}: how many shots does $S$ leave on the table among schedules using its own settings, $(\Phi_S/U_{\rm own})^2 - 1$, where $U_{\rm own}$ is the certified optimum over $\mathcal{D}_S$?  Second, the \emph{global overhead} relative to the certified QWC+FC optimum.  (These are faithful re-implementations of the strategy classes, labeled as such; Sec.~\ref{sec:published} performs this comparison against the published implementations in their own setting, with their reported numbers.)

\begin{table*}[t]
\caption{Certified gaps of strategy-class schedules (depolarized covariances, $p = 0.02$).  ``Own gap'' is the shot overhead of the schedule relative to the certified optimum over \emph{its own settings}; ``vs global'' is relative to the certified QWC+FC optimum of Table~\ref{tab:molecular}.  All own-landscape optima carry verified certificates.}
\label{tab:strategies}
\begin{ruledtabular}
\begin{tabular}{llccccc}
System & Strategy & Settings & $\Phi_S$ & $U_{\rm own}$ & Own gap (shots) & vs global (shots) \\
\hline
H$_2$          & SI-QWC & 3  & 0.4137 & 0.4137 & 0\%    & 11\% \\
H$_2$          & SI-FC  & 2  & 0.4124 & 0.3929 & 10\%   & 10\% \\
H$_2$          & LDF-FC & 2  & 0.4124 & 0.3929 & 10\%   & 10\% \\
H$_2$          & WG-PB  & 3  & 0.4137 & 0.4137 & 0\%    & 11\% \\
H$_2$ (6-31G)  & SI-QWC & 27 & 1.8603 & 1.5396 & 46\%   & 263\% \\
H$_2$ (6-31G)  & SI-FC  & 9  & 1.4860 & 1.3805 & 16\%   & 131\% \\
H$_2$ (6-31G)  & LDF-FC & 8  & 1.3557 & 1.1213 & 46\%   & 93\% \\
H$_2$ (6-31G)  & WG-PB  & 25 & 1.8852 & 1.5455 & 49\%   & 273\% \\
LiH            & SI-QWC & 42 & 1.0854 & 0.7823 & 93\%   & 241\% \\
LiH            & SI-FC  & 14 & 0.8108 & 0.5951 & 86\%   & 90\% \\
LiH            & LDF-FC & 14 & 1.8014 & 0.6119 & 767\%  & 839\% \\
LiH            & WG-PB  & 36 & 0.9743 & 0.7630 & 63\%   & 175\% \\
BeH$_2$        & SI-QWC & 51 & 2.3818 & 1.9187 & 54\%   & 692\% \\
BeH$_2$        & SI-FC  & 17 & 1.1219 & 0.8662 & 68\%   & 76\% \\
BeH$_2$        & LDF-FC & 16 & 1.7195 & 1.2378 & 93\%   & 313\% \\
BeH$_2$        & WG-PB  & 45 & 2.3363 & 1.8719 & 56\%   & 662\% \\
H$_2$O         & SI-QWC & 91 & 4.2606 & 2.9730 & 105\%  & 765\% \\
H$_2$O         & SI-FC  & 29 & 2.5219 & 1.5283 & 172\%  & 203\% \\
H$_2$O         & LDF-FC & 28 & 4.5617 & 1.6478 & 666\%  & 892\% \\
H$_2$O         & WG-PB  & 87 & 4.2974 & 2.9692 & 110\%  & 780\% \\
\end{tabular}
\end{ruledtabular}
\end{table*}

On real molecular data, Table~\ref{tab:strategies} shows the three regimes a benchmark should distinguish.  (i) \emph{Heuristic already optimal.}  On H$_2$, SI-QWC and WG-PB sit at zero own-landscape gap: the certificate \emph{validates} these heuristics: their simple assignments happen to be the true conic optimum of their settings, a statement no amount of schedule-versus-schedule comparison could establish.  (ii) \emph{Gauge left on the table.}  By H$_2$O, SI-QWC, SI-FC, and WG-PB each leave about $2.1$--$2.7\times$ in shots within their own settings, purely from unoptimized coefficient splitting; their contexts are good, but their gauges are not.  (iii) \emph{Wrong landscape.}  LDF-FC on LiH and H$_2$O leaves $7$--$8\times$ inside its own landscape and $8$--$9\times$ globally: here the schedule, not just the splitting, is far from optimal, and column generation (Sec.~\ref{sec:colgendemo}) is the indicated cure.  The gaps grow systematically with system size, which is the practically important direction.

\subsection{Data-driven robust certificates}
\label{sec:robust-demo}
\label{sec:robustdemo}

Theorem~\ref{thm:robust} is exercised end to end with \emph{no oracle covariances}: for each context we simulate $N$ pilot shots, form $\widehat\Gamma_\ell$, and construct the matrices in Eq.~\eqref{eq:empirical-sandwich} using the simultaneous radius of Theorem~\ref{thm:coverage}.  Table~\ref{tab:robust} illustrates both the guarantee and its price.  Every displayed interval contains the oracle optimum, as expected probabilistically but not guaranteed for each realized experiment.  The proved radius is conservative: because it is linear in the context width, the intervals remain wide even at large pilot budgets, and the lower bound can plateau when a wide context is assigned the zero lower matrix.  We report this proved construction rather than a tuned constant; any narrower radius must bring its own valid simultaneous-coverage argument.

\emph{How much room is there?}  The worst-case radius is tight for \emph{adversarial} $\pm1$ outcome distributions, but chemical measurement covariances are not adversarial, and it is worth knowing how loose the bound is in practice.  We answer this empirically, by calibration.  Over the family $r_\ell(c,a) = c\, m_\ell^{\,a}\sqrt{\kappa_\ell/N_\ell}$ we search $(c,a)$ with the RANGE global optimizer \cite{RANGE}, minimizing the mean radius \emph{subject to} the simultaneous empirical coverage staying at or above $1-\delta$ across simulated pilot ensembles at several budgets.  On LiH (12 contexts, widths 2--55) the coverage-constrained optimum found by RANGE is
\[
r_\ell \;\approx\; 0.67\; m_\ell^{\,0.81}\,\sqrt{\kappa_\ell/N_\ell},
\]
holding $\ge 95\%$ simultaneous coverage at every budget while yielding brackets $10.4$--$10.6\times$ tighter than the proved bound.  Two things follow.  The leading constant is loose by roughly $6\times$, and, more informatively, the \emph{exponent} is $\approx 0.8$ rather than $1$: the worst-case matrix variance $m^2/4$ is simply not attained by chemical covariance matrices.  We stress the epistemic status: this is an empirical calibration on simulated pilot data for these states and contexts, not a theorem, and we use the proved radius for every certificate reported in this paper.  It nevertheless quantifies the prize: a structure-aware concentration bound exploiting the near-low-rank character of chemical covariances would tighten the data-driven certificates by an order of magnitude, and the exponent above is the target such a bound should aim at.

\begin{table}[b]
\caption{Robust certificates from pilot data alone, using \emph{the radius of Theorem~\ref{thm:coverage}} ($\delta = 0.05$, simultaneous over all contexts; $p = 0.02$; QWC dictionaries).  All reported realizations cover the oracle optimum $\Phi^\star$, consistent with the $1-\delta$ coverage statement (which is probabilistic: it does not assert that every realized interval must cover, and the six rows below are separate simulated experiments rather than one jointly certified family).  The brackets are conservative because the proved radius is linear in the context width $m_\ell$; any tighter statistically valid radius substitutes directly (see text).}
\label{tab:robust}
\begin{ruledtabular}
\begin{tabular}{lccccc}
System & $\Phi^\star$ & shots/ctx & $L_{\rm rob}$ & $U_{\rm rob}$ & covers $\Phi^\star$ \\
\hline
H$_2$ & 0.4137 & $10^4$ & 0.168 & 0.820 & yes \\
H$_2$ & 0.4137 & $10^5$ & 0.174 & 0.586 & yes \\
H$_2$ & 0.4137 & $10^6$ & 0.176 & 0.480 & yes \\
H$_2$ & 0.4137 & $10^7$ & 0.177 & 0.436 & yes \\
LiH & 0.7823 & $10^5$ & 0.077 & 2.702 & yes \\
LiH & 0.7823 & $10^6$ & 0.111 & 1.715 & yes \\
\end{tabular}
\end{ruledtabular}
\end{table}

\subsection{Designing the dictionary: certified compression}
\label{sec:dictdesign}

The dictionary caps declared in this work (visibility, candidate-pool size) are practical choices, and a fair question is whether a \emph{smaller} dictionary would certify the same optimum, which matters directly, since the conic program carries one variable per (context, visible term) pair and the production instances reach $\sim\!2\times10^6$ of them.

This is a genuine optimization problem, and unlike the pricing step it is not a per-context score: contexts interact through the coefficient-splitting optimum, so no greedy criterion is available.  The feasible set is the family of \emph{covering} sub-dictionaries (every Hamiltonian coefficient visible in at least one retained context), the objective is the certified $\Phi$ of the retained sub-dictionary (each evaluation a full SOCP solve), and the search is combinatorial.  We hand it to an evolutionary global optimizer with coverage repaired inside the moves \cite{RANGE}.

\begin{table}[t]
\caption{Certified dictionary compression.  Each entry is the exact optimum \emph{of its own landscape}, with its own dual witness; the excess is measured against the full dictionary.  Global search finds compressed dictionaries that retain nearly all certified quality, and beats greedy set cover given substantially more contexts.}
\label{tab:dictcomp}
\begin{ruledtabular}
\begin{tabular}{lccccc}
 & \multicolumn{2}{c}{full} & \multicolumn{1}{c}{greedy cover} & \multicolumn{2}{c}{global search} \\
System & ctx & $\Phi$ & ctx / excess & ctx & excess \\
\hline
LiH   & 100 & 0.5879 & 19 / $+1.5\%$ & \textbf{23} & $\mathbf{+0.2\%}$ \\
BeH$_2$ & 122 & 0.8465 & 16 / $+2.3\%$ & \textbf{20} & $\mathbf{+1.0\%}$ \\
H$_2$O & 220 & 1.4485 & 26 / $+6.4\%$ & \textbf{38} & $\mathbf{+2.1\%}$ \\
\end{tabular}
\end{ruledtabular}
\end{table}

The compression is substantial and it generalizes across all three systems tested (Table~\ref{tab:dictcomp}): $4.3\times$ on LiH at a certified cost of $0.2\%$, $6.1\times$ on BeH$_2$ at $1.0\%$, and $5.8\times$ on H$_2$O at $2.1\%$.  On LiH the searched dictionary also beats a greedy one given $74\%$ more contexts (23 contexts at $+0.2\%$ versus 40 at $+1.4\%$), because the objective is not a per-context score: contexts interact through the splitting optimum, and greedy criteria cannot see that.  The certificate makes this rigorous rather than a heuristic economy: each compressed dictionary carries its own dual witness, so the reported $\Phi$ is the exact optimum \emph{of the compressed landscape} and the excess against the full landscape is itself certified.

\emph{The trade-off has a knee.}  Running the same search with a Pareto archive (minimizing dictionary size and certified $\Phi$ simultaneously, and retaining every nondominated genome rather than a single best) returns the trade-off front rather than a single point.  On LiH it is sharply structured: at 19--21 contexts the certified optimum is $10$--$14\%$ above the full-landscape value, at 22 contexts it collapses to $+1.2\%$, and beyond that additional contexts buy almost nothing ($+0.6\%$ at 25).  The dictionary is therefore not smoothly compressible: there is a threshold below which certified quality degrades sharply and above which contexts are nearly free.  Knowing where that knee sits is exactly what a practitioner needs before committing to a production landscape, and it is a question only a certificate can answer: an uncertified schedule cannot tell you whether a smaller dictionary has cost you anything.  For production-scale certification, where the conic solve dominates ($\sim\!2\times10^6$ cone variables and tens of minutes per solve), operating at the knee rather than the full landscape is the practical route to larger systems.

\emph{Where search does \emph{not} belong.}  The boundary deserves an explicit statement, because global optimization is easy to over-apply.  The coefficient splittings and shot allocations, the variables the paper is \emph{about}, form a convex program, and search on them is not merely unnecessary but harmful: running the same evolutionary machinery over the splitting variables of LiH (projected onto the feasible affine set) returns $\Phi = 25.10$ against the SOCP's $0.7823$, a factor of $32$ worse, at twice the wall time, and it produces no dual witness at all.  Convexity is a resource; where it exists, it should be used, and the role of global search in this framework is confined to the genuinely combinatorial layers around the convex core: which dictionary to declare, and (in the companion work) which Hamiltonian to hand the compiler.

\subsection{Column generation over a stabilizer landscape}
\label{sec:colgendemo}

Proposition~\ref{prop:colgen} is demonstrated on LiH.  The active dictionary starts as the QWC family (certified value 0.782); the candidate pool is the FC family.  Each round prices every candidate with the current dual witness and admits the two worst violators.  The certified value descends monotonically, 0.782 \(\to\) 0.588 over ten rounds, and terminates when the pricing residual \(\nu^\star_y\) drops below one, at which point the schedule is certified optimal over the \emph{union} landscape, and indeed matches the value obtained by solving the full QWC+FC dictionary directly.  The pricing subproblem is itself a combinatorial search over admissible contexts; beyond the greedy oracle used here, population-based global search (evolutionary optimization with the exact residual $\nu^{\star}_y$ as fitness) is a drop-in alternative our implementation exposes, attractive precisely because the certified residual makes the fitness exact and cheap to evaluate.  The descent is strictly monotone round over round, and the pricing residual crosses the certification threshold $\nu_y^\star = 1$ at termination.

\subsection{Production \texorpdfstring{\(f\)}{f}-element Hamiltonians}
\label{sec:felement}

The most demanding test uses four production heavy-element qubit Hamiltonians from an exascale electronic-structure pipeline: CeO (\(4f^15d^1\)), NdO (\(4f^3\)), UO\(_2^{2+}\) (\(5f^0\)), and UO\(_2^{+}\) (\(5f^1\)), each prepared by state-averaged CASSCF followed by a compression stack of symmetry tapering, symmetry-shift (BLISS) norm reduction, and frozen-natural-orbital truncation, yielding 29--35 qubit Hamiltonians (companion resource-estimates paper: Zahariev, Glezakou \emph{et al.}, in preparation).  These systems are deliberately outside the comfort zone of measurement benchmarking, which has concentrated on light main-group molecules: \(f\)-element Hamiltonians combine large diagonal sectors, strong multireference character, and term counts in the $1.5$--$2.9\times10^4$ range.

At this scale exact covariances are unavailable; we use two proxy models permitted by the robust framework.  The Hartree--Fock product-state proxy is fully analytic (every Pauli expectation on a computational basis state is \(0\) or \(\pm 1\)), runs on a laptop at any qubit count, and yields deterministic certificates for the declared model.  For CeO, a converged 29-qubit ADAPT-VQE statevector from the companion pipeline also makes a correlated-state sensitivity study possible.  The production values reported here use the Hartree--Fock proxy throughout; the artifact will include the comparison workflow so that the covariance-model dependence can be evaluated without changing the certificate machinery.

\begin{table*}[t]
\caption{Certified measurement costs for production \(f\)-element Hamiltonians (final compressed operators; QWC+FC dictionary; Hartree--Fock-proxy depolarized covariances, \(p = 0.02\)).  \(\Phi^2\) is the leading shot constant (\(N \approx \Phi^2/\varepsilon^2\)); ``singleton'' and ``greedy'' are shot ratios relative to the certified optimum.  All residuals \(\le 2.6\times10^{-6}\).  One Perlmutter CPU node per system.}
\label{tab:felement}
\begin{ruledtabular}
\begin{tabular}{lccccccccc}
System & qubits & terms & contexts & \(\Phi\) & \(\Phi^2\) & gap (\%) & singleton & greedy & wall (s) \\
\hline
CeO (\(4f^15d^1\))      & 29 & 29,315 & 10,427 & 13.099 & 171.6 & 0.014 & \(960\times\) & \(7.8\times\) & 878 \\
NdO (\(4f^3\))          & 31 & 21,036 & 7,613  & 6.731  & 45.3  & 0.064 & \(1716\times\) & \(3.2\times\) & 528 \\
UO\(_2^{+}\) (\(5f^1\))   & 33 & 15,543 & 5,996  & 12.274 & 150.6 & 0.006 & \(475\times\) & \(12.6\times\) & 485 \\
UO\(_2^{2+}\) (\(5f^0\))  & 35 & 28,234 & 10,997 & 31.516 & 993.2 & 0.0006 & \(76\times\) & \(3.5\times\) & 894 \\
\end{tabular}
\end{ruledtabular}
\end{table*}

Beyond raw costs, the \(f\)-element tier answers a structural question: what is the certified value of entangled measurement settings on production heavy-element operators?  Table~\ref{tab:felement_dict} prices the fully commuting (Clifford-accessible) enlargement against product settings alone on identical Hamiltonians and under identical declared caps, giving, to our knowledge, the first dual-certified measurement-cost comparison of these two declared classes at production scale.

\begin{table}[t]
\caption{Certified value of the declared capped QWC+FC dictionary (i.e.\ of admitting fully commuting, Clifford-accessible settings) on the production \(f\)-element operators: identical Hamiltonians, identical covariance model and caps; the dictionaries differ only by the inclusion of FC contexts.  Shots saved \(= 1 - (\Phi_{\rm QWC+FC}/\Phi_{\rm QWC})^2\), certified from both sides.}
\label{tab:felement_dict}
\begin{ruledtabular}
\begin{tabular}{lcccc}
System & \(\Phi\) (QWC) & \(\Phi\) (QWC+FC) & ratio & shots saved \\
\hline
CeO         & 19.009 & 13.099 & 1.451 & \textbf{52.5\%} \\
NdO         & 8.127  & 6.731  & 1.207 & 31.4\% \\
UO\(_2^{+}\)  & 22.299 & 12.274 & 1.817 & \textbf{69.7\%} \\
UO\(_2^{2+}\) & 42.302 & 31.516 & 1.342 & 44.5\% \\
\end{tabular}
\end{ruledtabular}
\end{table}

\begin{table}[t]
\caption{The compression stack seen by the two cost models.  \(\Phi\) is the certified measurement constant (same dictionary and covariance model at every stage); \(\lambda\) is the block-encoding 1-norm that the same compression was designed to reduce.  ``shots saved'' is \(1-(\Phi_{\rm after}/\Phi_{\rm before})^2\).}
\label{tab:felement_compression}
\begin{ruledtabular}
\begin{tabular}{lccccc}
System & FNO & $+$BLISS & $+$taper & \multicolumn{2}{c}{shots saved} \\
 & \(\Phi\) & \(\Phi\) & \(\Phi\) & BLISS & taper \\
\hline
CeO         & 18.007 & 17.457 & 13.099 & 6.0\%  & \textbf{43.7\%} \\
NdO         & 7.995  & 7.392  & 6.731  & 14.5\% & 17.1\% \\
UO\(_2^{+}\)  & 15.783 & 15.507 & 12.274 & 3.5\%  & \textbf{37.4\%} \\
UO\(_2^{2+}\) & 37.742 & 37.516 & 31.516 & 1.2\%  & 29.4\% \\
\end{tabular}
\end{ruledtabular}
\end{table}

\emph{The two cost axes see the same compression completely differently.}  Table~\ref{tab:felement_compression} follows the certified measurement constant through the stages of a pipeline built to reduce the \emph{simulation} norm.  On CeO the symmetry (BLISS) shift cuts the block-encoding 1-norm by \(23\%\) (\(\lambda = 186.8 \to 143.6\)~Ha) but the measurement constant by only \(3\%\) (\(\Phi = 18.01 \to 17.46\)); \(\mathbb{Z}_2\) tapering then does the reverse, cutting \(\lambda\) by a further \(12\%\) while cutting \(\Phi\) by \(25\%\), a \(44\%\) saving in shots from that stage alone.  The same contrast appears across all four tested systems: the symmetry shift buys \(1\)--\(15\%\) of the shot budget, tapering buys \(17\)--\(44\%\).

The mechanism is transparent in the certificate framework, and it explains why a compression stack tuned on \(\lambda\) cannot be assumed to help sampling.  A symmetry shift moves operator weight onto \(\hat N\) and \(\hat S_z\), which are \emph{diagonal}: the 1-norm counts their coefficients in full, so \(\lambda\) falls steeply, but under the state model these operators carry (near-)zero variance, so the sampling cost barely registers the change: the shift relocates weight into directions the estimator never has to pay for.  Tapering, by contrast, physically removes a qubit and folds the operator support, which the covariance structure feels directly.  The same observation appears independently in the companion Clifford-measurement paper, where the symmetry shift is shown to leave the \(X\)-rank structure of the commuting families \emph{exactly} invariant.  Simulation cost and measurement cost are, to a good approximation, separate currencies: a compression designed for one must be certified against the other, which is precisely what this framework provides.

Three findings.  First, \emph{the generic bound is loose by two orders of magnitude at production scale}: CeO's variational 1-norm is \(\alpha = 126.9\)~Ha while its certified optimum is \(\Phi = 13.10\), a \((126.9/13.10)^2 = 94\times\) reduction in shots, established with a machine-checkable witness rather than an estimate.  Second, \emph{admitting fully commuting (Clifford-accessible) settings carries large, certified value on real heavy-element operators}: relative to the declared capped QWC dictionary, the declared capped QWC+FC dictionary saves \(31\)--\(70\%\) of all shots at equal precision, with UO\(_2^{+}\) the extreme case.  Both statements are certified over \emph{declared, capped} dictionaries (visibility cap 200, FC pool cap 1500, identical across the compared runs); a class-level claim over all stabilizer settings would require an exact pricing oracle, which we do not have: the heuristic FC pool with a reported residual certifies the declared landscape, not the full Clifford-accessible class.  The CH\(_4\) calibration (\(32\%\)) sits at the bottom of the \(f\)-element range.  We resist reading this as a correlation-strength trend: all four \(f\)-element values are \emph{model-certified} under the HF-proxy covariance, which is a design model rather than the prepared state, and a covariance model that is itself uncorrelated cannot testify about correlation strength.  Testing such a trend requires correlated-state covariances.  The artifact will include a comparison workflow, and the available CeO ADAPT statevector provides one sensitivity case; no production-wide correlation-strength trend is claimed here.  Third, \emph{heuristics remain far from optimal here}: sorted-insertion grouping pays \(3.2\)--\(12.6\times\) the certified shot count and term-by-term measurement \(76\)--\(1716\times\), so the optimization headroom does not close as systems grow.  Each certificate was produced on a single CPU node in 8--15 minutes at \(1.5\)--\(2.9\times10^4\) Pauli terms and \(\sim\!10^4\) contexts, with certified gaps of \(6\times10^{-4}\) to \(6\times10^{-2}\) percent: the instrument is practical at the scale where the questions are real.

\emph{Calibrating the class on a small system.}  Section~\ref{sec:strategies} graded strategy classes; the same instrument prices the value of \emph{admitting entangled settings into the declared dictionary}.  A companion work \cite{Zahariev2026Clifford} proves that every commuting Pauli family is simultaneously diagonalizable by a Clifford circuit, so fully commuting (FC) Pauli settings carry zero non-Clifford ($T$-gate) cost in their basis changes.  This class strictly enlarges the product-setting (QWC) dictionary used below, but it does not contain generic orbital-rotation contexts; the Clifford and orbital-rotation single-context classes are incomparable.  The certificate quantifies what admitting the FC enlargement into the declared dictionary is worth in shots.  On CH$_4$/STO-3G (18 qubits, 6891 Pauli terms; the tightness witness of the $X$-rank bound in \cite{Zahariev2026Clifford}), with identical declared visibility caps and the HF-proxy covariance model of Sec.~\ref{sec:felement}, the certified optimum over product (QWC) settings is $U = 10.831$ (gap $9{\times}10^{-7}\,\%$), while adding the Clifford-accessible FC family lowers it to $U = 8.911$ (gap $1{\times}10^{-7}\,\%$): a certified $32\%$ shot reduction attributable to admitting the larger zero-$T$ FC class, with duality proving no schedule over the declared capped product-setting dictionary (visibility cap 80, identical for both runs) can close the difference; an uncapped, class-level statement would require an \emph{exact} pricing oracle over all stabilizer contexts, which we do not have: the heuristic pool with a reported residual certifies the declared landscape only.  Conversely, the same machinery grades any single exported schedule via its own-landscape gap, exactly as in Table~\ref{tab:strategies}: the sorted-insertion FC schedule on this CH$_4$ landscape grades at $2.75\times$ its own certified optimum (and the QWC schedule at $1.86\times$ of its own), quantifying precisely the shot value that coefficient splitting and optimal allocation add on top of the Clifford class itself.

\section{Discussion}

The central claim can be stated compactly:
\begin{equation}
  \boxed{
  \begin{gathered}
  \text{landscape geometry}+\text{convex gauge optimization}\\
  +\text{duality}=\text{certified measurement reduction}
  \end{gathered}}
\end{equation}
The value of the formulation is not that it names existing methods in a common language.  Its value is that, after a finite measurement landscape is declared, the best stratified unbiased estimator is exactly computable, and its optimality is certifiable.  This creates a common standard for measurement-reduction claims: publish the schedule, publish the dual witness, and publish the gap.

The division of labor between the two optimization layers is enforced, not assumed: running the same global search machinery on the convex splitting variables returns a value a factor of 32 worse than the SOCP optimum at twice the wall time, and produces no dual witness (Sec.~\ref{sec:dictdesign}).  Search in this framework is therefore confined to the genuinely combinatorial layers around the convex core, and convexity is treated as the resource it is.

The formulation also clarifies the role of new measurement ideas.  A new grouping rule, shadow distribution, fermionic decomposition, or hardware-native basis is useful because it adds better points to the landscape.  The conic program then decides how to use those points, and the dual witness quantifies whether more points are still needed.  In this sense, the landscape turns measurement reduction from a bag of tricks into geometry plus optimization.

We state the limitations plainly, since a certificate framework must be honest about its own scope.  First, the certificate is conditional on the declared covariance model: an oracle-covariance certificate is exact for that model, and the robust version transfers validity to the true state only through the declared confidence sets: garbage covariances yield rigorous statements about garbage.  Second, the guarantee is relative to the declared landscape; column generation extends it to larger landscapes only as far as the pricing oracle reaches, and exact pricing over all stabilizer contexts is itself a hard combinatorial problem, for which we currently use heuristic pools with a reported residual.  Third, the conic program has one variable per (setting, visible term) pair.  The benchmark instances (hundreds of terms, hundreds of settings) are solved with a generic modeling layer; the production $f$-element instances ($1.5$--$3.3\times10^4$ terms, $\sim\!10^4$ contexts, $\sim\!2\times10^6$ cone variables) required a direct first-order conic implementation: the constraint matrix is assembled explicitly in sparse form as one equality block plus one block-diagonal second-order-cone block and handed to a splitting solver (SCS), bypassing symbolic canonicalization entirely, and converge in 8--30 minutes on one CPU node at $\varepsilon = 10^{-5}$ with certified gaps below $10^{-1}\%$.  Scaling to $10^5$-term Hamiltonians is a matter of the same machinery plus context-pool pruning, not of a new formulation; the SOCP structure, by design, makes that path available.  Fourth, the framework certifies stratified unbiased estimators of a known target; multi-observable, adaptive, and informationally complete protocols are graded only through the estimators they induce for the target at hand.  A natural next target is the measurement layer of Hadamard-test estimators for dynamical correlation functions \cite{Zahariev2026OpenBoundary}, where the Pauli decompositions of the probe observables define exactly the stratified estimation problem treated here.

For benchmark practice we propose the following reporting standard, which every table in this paper obeys: alongside RMSE-versus-shots, publish the dictionary, the schedule, the dual witness, $U$, $L$, the gap, and the pricing residual over the candidate pool.  The verifier is a hundred lines of code that trusts nothing but linear algebra.  A new method then makes one of three checkable claims: it found a better point in a known landscape, it enlarged the landscape, or the baseline was already provably near-optimal and the comparison is uninformative.

\section{Data availability}
Upon publication, the reproducibility artifact will be archived at Zenodo and assigned a DOI.  It will contain the certificate verifier and unit tests, the synthetic and molecular benchmarks, robust-certificate and column-generation demonstrations, the $f$-element adapter, and the machine-readable certificates and verifier logs used in the tables.  The archive will also record the exact code version and input hashes required to regenerate the feasibility-repaired bounds.

\begin{acknowledgments}
The authors acknowledge partial support by ORNL's LDRD and VSO programs.  This manuscript has been authored by UT-Battelle, LLC, under contract DE-AC05-00OR22725 with the U.S. Department of Energy (DOE).  This work was also supported by the U.S. Department of Energy, Office of Science, through the Ames National Laboratory under Contract No.\ DE-AC02-07CH11358.  The US government retains and the publisher, by accepting the article for publication, acknowledges that the US government retains a nonexclusive, paid-up, irrevocable, worldwide license to publish or reproduce the published form of this manuscript, or allow others to do so, for US government purposes.  DOE will provide public access to these results of federally sponsored research in accordance with the DOE Public Access Plan (https://www.energy.gov/doe-public-access-plan).  This research used resources of the Oak Ridge Leadership Computing Facility at the Oak Ridge National Laboratory, supported by the Office of Science of the U.S. Department of Energy under Contract No.\ DE-AC05-00OR22725, and resources of the National Energy Research Scientific Computing Center (NERSC), a Department of Energy Office of Science User Facility, using NERSC award m4621 (ERCAP0036406).
\end{acknowledgments}

\appendix
\onecolumngrid

\section{Proof of the landscape-optimal estimator theorem}
\label{app:proofs}

Fix a feasible decomposition \(h=\sum_\ell A_\ell x_\ell\) and write
\begin{equation}
  v_\ell=x_\ell^T\Gamma_\ell x_\ell\ge0.
\end{equation}
The stratified estimator is the sum of independent sample means, so its variance is
\begin{equation}
  \sum_\ell \frac{v_\ell}{N_\ell}.
\end{equation}
We first treat \(N_\ell>0\) continuously.  The problem
\begin{equation}
  \min_{N_\ell>0}\sum_\ell c_\ell N_\ell
  \quad \text{s.t.}\quad
  \sum_\ell \frac{v_\ell}{N_\ell}\le\eps^2
\end{equation}
has Lagrangian
\begin{equation}
  \mathcal L(N,\lambda)=
  \sum_\ell c_\ell N_\ell+
  \lambda\left(\sum_\ell \frac{v_\ell}{N_\ell}-\eps^2\right).
\end{equation}
For \(v_\ell>0\), stationarity gives
\begin{equation}
  c_\ell-\lambda \frac{v_\ell}{N_\ell^2}=0,
  \qquad
  N_\ell=\sqrt{\lambda v_\ell/c_\ell}.
\end{equation}
The active constraint gives
\begin{equation}
  \frac{1}{\sqrt\lambda}\sum_\ell \sqrt{c_\ell v_\ell}=\eps^2,
\end{equation}
so
\begin{equation}
  \sqrt\lambda=\frac{\sum_\ell\sqrt{c_\ell v_\ell}}{\eps^2}.
\end{equation}
The minimized cost is therefore
\begin{equation}
  \sum_\ell c_\ell N_\ell
  =\sqrt\lambda\sum_\ell\sqrt{c_\ell v_\ell}
  =\frac{\left(\sum_\ell\sqrt{c_\ell v_\ell}\right)^2}{\eps^2}.
\end{equation}
Terms with \(v_\ell=0\) require no asymptotic shots and are included by continuity.  Minimizing the numerator over all feasible fragment decompositions gives Eq.~\eqref{eq:primal}.  Integer rounding changes only lower-order terms in the large-shot regime.  This proves Theorem~\ref{thm:main}.

\section{Derivation of the conic dual}
\label{app:dual_derivation}

The primal gauge can be written
\begin{equation}
  \min_x \sum_\ell \phi_\ell(x_\ell)
  \quad\text{s.t.}\quad
  \sum_\ell A_\ell x_\ell=h,
\end{equation}
where
\begin{equation}
  \phi_\ell(x)=\sqrt{c_\ell}\sqrt{x^T\Gamma_\ell x}.
\end{equation}
Its Lagrangian is
\begin{equation}
  \mathcal L(x,y)=
  \sum_\ell \phi_\ell(x_\ell)+y^T\left(h-\sum_\ell A_\ell x_\ell\right).
\end{equation}
The infimum over \(x_\ell\) is finite precisely when
\begin{equation}
  (A_\ell^Ty)^Tx\le \phi_\ell(x)
  \quad \forall x,
\end{equation}
which is exactly
\begin{equation}
  \norm{A_\ell^Ty}_{\Gamma_\ell,*}\le\sqrt{c_\ell}.
\end{equation}
Under this condition the infimum is zero, and the dual objective is $y^Th$.  This gives Eq.~\eqref{eq:dual}.  Weak duality is immediate and is all that a lower-bound certificate requires.  If the target lies in the relative interior of the dictionary span after deterministic zero-variance directions are quotiented out, the standard relative-interior condition for conic strong duality holds; the dual supremum is then attained and equals the primal value.

\section{Robust covariance sets}
\label{app:robust}

The robust theorem is deterministic conditional on a simultaneous covariance event.  Suppose
\begin{equation}
  0\preceq\underline{\Gamma}_\ell\preceq\Gamma_\ell\preceq\overline{\Gamma}_\ell.
\end{equation}
Then $x^T\Gamma_\ell x\le x^T\overline{\Gamma}_\ell x$ for every fragment coefficient vector $x$, so the robust primal computed with $\overline{\Gamma}_\ell$ upper-bounds the true optimum.

For the dual, $\underline{\Gamma}_\ell\preceq\Gamma_\ell$ implies $\operatorname{null}(\Gamma_\ell)\subseteq\operatorname{null}(\underline{\Gamma}_\ell)$ and therefore $\operatorname{range}(\underline{\Gamma}_\ell)\subseteq\operatorname{range}(\Gamma_\ell)$.  On that range, PSD order reverses under the pseudoinverse quadratic form:
\begin{equation}
  z^T\underline{\Gamma}_\ell^{\dagger}z
  \ge z^T\Gamma_\ell^{\dagger}z.
\end{equation}
Thus the dual constraint built from $\underline{\Gamma}_\ell$ is at least as restrictive as the true constraint.  Every robust dual witness is true-dual feasible and supplies a lower bound.  Combining the two inequalities proves Theorem~\ref{thm:robust}.

Many statistical routines can produce the required sets.  One conservative option is to use bounded pilot outcomes and matrix concentration inequalities \cite{Tropp2015}.  A more implementation-friendly approach is to use empirical Bernstein or bootstrap-calibrated ellipsoids, provided the coverage guarantee is validated.  The optimization layer is agnostic to the choice: it only consumes the final confidence set.

\section{Certified repair: from a numerical solution to a valid bracket}
\label{app:repair}

A conic solver returns $(x, y)$ satisfying the constraints only to tolerance.  The following repair converts them into a mathematically valid bracket; the verifier \emph{recomputes} it rather than trusting stored values.

\begin{proposition}[Feasibility-repaired numerical bracket]
\label{prop:repair}
Let $A = [A_1 \cdots A_L]$ be the decomposition operator and $h$ the target coefficient vector.  Assume $h \in \operatorname{range}(A)$.  Let $B$ be a declared repair map satisfying $ABA=A$ (equivalently, $AB$ is the identity on $\operatorname{range}(A)$); the verifier either checks this relation or reconstructs $B$ from a declared deterministic rule.  For the Pauli-coordinate dictionaries used in the benchmarks, $A$ is an incidence map, $h\in\operatorname{range}(A)$ is exactly target-term coverage, and $B$ is the sparse map that assigns each global coefficient to one visible copy.  Given a numerical primal $x$ with residual $r = h - Ax$, set
\[
x_{\rm cert} = x + B r ,
\qquad
U_{\rm cert} = \sum_\ell \sqrt{c_\ell}\,\bigl\|\Gamma_\ell^{1/2} x_{{\rm cert},\ell}\bigr\|_2 .
\]
Then $A x_{\rm cert} = h$, so $x_{\rm cert}$ is feasible and $\Phi \le U_{\rm cert}$.  For the dual, put $z_\ell = A_\ell^\top y$ and evaluate the \emph{dual} seminorm of Theorem~\ref{thm:dual}, the one built from the Moore--Penrose pseudoinverse and not from $\Gamma_\ell$ itself:
\[
\nu(y) = \max_\ell \frac{\sqrt{z_\ell^\top \Gamma_\ell^{\dagger} z_\ell}}{\sqrt{c_\ell}},
\qquad
y_{\rm cert} = \frac{y}{\max\{1, \nu(y)\}},
\]
\emph{provided} $z_\ell \in \operatorname{range}(\Gamma_\ell)$ for every $\ell$.  Then $y_{\rm cert}$ is dual feasible and $L_{\rm cert} = h^\top y_{\rm cert} \le \Phi$, after orienting $y$ so that $h^\top y \ge 0$.  If the range condition fails for any $\ell$, the witness is \emph{not} repairable by rescaling (a pseudoinverse would silently discard the offending null-space component), and the verifier returns the valid but vacuous $y_{\rm cert} = 0$, $L_{\rm cert} = 0$.  Under these conditions
\[
L_{\rm cert} \;\le\; \Phi \;\le\; U_{\rm cert}
\]
holds for any numerical $(x,y)$, with a non-negative gap whenever $L_{\rm cert} > 0$.

Using the primal covariance quadratic form in place of $\Gamma_\ell^{\dagger}$ here is not a cosmetic slip but a source of \emph{false} certificates: for the one-dimensional instance $A = h = c = 1$, $\Gamma = 0.01$ (true optimum $\Phi = 0.1$), the primal-norm ratio evaluates to $0.1 \le 1$ and leaves $y = 1$ unrescaled, reporting the impossible bracket $1 \le \Phi \le 0.1$; the dual seminorm gives $\nu = \sqrt{\Gamma^{\dagger}} = 10$, rescales $y$ to $0.1$, and returns the tight and correct $L_{\rm cert} = \Phi = 0.1$.  This instance is the first of the verifier's unit tests.
\end{proposition}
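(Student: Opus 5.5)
The proof splits into a primal half and a dual half, each following directly from the definitions and from weak duality in Theorem~\ref{thm:dual}.

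\emph{Primal half.} Compute $A x_{\rm cert} = Ax + ABr$. Since $h\in\range(A)$ and $Ax\in\range(A)$, the residual $r$ lies in $\range(A)$. Write $r=Aw$; then $ABr = ABAw = Aw = r$. Hence $Ax_{\rm cert}=Ax+r=h$, so $x_{\rm cert}$ is feasible for Eq.~\eqref{eq:primal}. Its objective is precisely $U_{\rm cert}$, because $\|\Gamma_\ell^{1/2}x\|_2=\sqrt{x^T\Gamma_\ell x}$. Since $\Phi$ is the minimum over feasible points, $\Phi\le U_{\rm cert}$. For the Pauli incidence case I will also check $ABA=A$ directly. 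Each column of $A$ is a unit vector $e_j$ marking the global term that a visible copy represents. $B$ sends $e_j$ to the indicator of one chosen copy of term $j$, and that copy exists by coverage. Therefore $AB e_j=e_j$ for every covered $j$, which is the identity on $\range(A)$.

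\emph{Dual half.} I will use that the dual seminorm is positively homogeneous: $\|A_\ell^T(sy)\|_{\Gamma_\ell,*}=s\|A_\ell^Ty\|_{\Gamma_\ell,*}$ for $s\ge0$, and the range condition is preserved under scaling. First orient $y$ by replacing it with $-y$ if $h^Ty<0$. This step is legitimate because $\nu(-y)=\nu(y)$ and the range condition is symmetric under sign. Under the range condition, $\|z_\ell\|_{\Gamma_\ell,*}=\sqrt{z_\ell^T\Gamma_\ell^\dagger z_\ell}\le \nu(y)\sqrt{c_\ell}$ for every $\ell$. Dividing by $\max\{1,\nu(y)\}$ makes every constraint of Eq.~\eqref{eq:dual} hold. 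Theorem~\ref{thm:dual} (weak duality, which needs no regularity) then gives $L_{\rm cert}=h^Ty_{\rm cert}\le\Phi$.

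To keep the argument self-contained, I will reprove the weak-duality step directly. For any feasible $x$ and any $\ell$, write $z_\ell=\Gamma_\ell w_\ell$. Then $z_\ell^Tx_\ell=w_\ell^T\Gamma_\ell x_\ell\le\sqrt{w_\ell^T\Gamma_\ell w_\ell}\sqrt{x_\ell^T\Gamma_\ell x_\ell}$ by Cauchy--Schwarz in the $\Gamma_\ell$ semi-inner product. The first factor equals $\sqrt{z_\ell^T\Gamma_\ell^\dagger z_\ell}$. Summing over $\ell$ gives $y^Th=\sum_\ell z_\ell^Tx_\ell\le\sum_\ell\sqrt{c_\ell}\sqrt{x_\ell^T\Gamma_\ell x_\ell}$. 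The range condition is exactly what permits writing $z_\ell=\Gamma_\ell w_\ell$, and this is the only delicate point. In the fallback case where the range condition fails, $y_{\rm cert}=0$ gives $L_{\rm cert}=0\le\Phi$ trivially, since $\Phi\ge0$.

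Combining the two halves gives $L_{\rm cert}\le\Phi\le U_{\rm cert}$, so the gap is non-negative. Finally, I will verify the counterexample by direct arithmetic. With $A=h=c=1$ and $\Gamma=0.01$, the primal is forced to $x=1$, so $\Phi=\sqrt{0.01}=0.1$. The dual seminorm gives $\nu=10$, so $y_{\rm cert}=0.1$ and $L_{\rm cert}=0.1$. The primal-form quantity instead gives $\sqrt{0.01}=0.1\le1$, which leaves $y=1$ unrescaled and produces the inverted bracket $1\le\Phi\le0.1$. I expect the main (mild) obstacle to be stating the range and pseudoinverse handling cleanly, since the rest is bookkeeping.
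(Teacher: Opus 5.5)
Your proposal is correct and follows the paper's own proof. Primal feasibility comes from $ABr=r$, which makes the gauge value an upper bound; then the range test, rescaling to dual feasibility, weak duality, and the vacuous fallback $L_{\rm cert}=0$ give the lower bound. Your extra steps only make explicit what the paper's terse proof takes for granted: that $r\in\range(A)$ is what justifies $ABr=r$, that the incidence repair map satisfies $ABA=A$, and the Cauchy--Schwarz rederivation of weak duality in the $\Gamma_\ell$ semi-inner product.
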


\begin{proof}
$A x_{\rm cert} = Ax + ABr = Ax + r = h$, giving primal feasibility, and $U_{\rm cert}$ is that feasible point's gauge value, hence an upper bound on the optimum.  For the dual, the range test is performed first ($\Gamma_\ell \Gamma_\ell^{\dagger} z_\ell = z_\ell$ to a declared tolerance); on witnesses that pass it, $\nu(y_{\rm cert}) \le 1$ by construction, which is exactly the dual constraint of Theorem~\ref{thm:dual}, and weak duality gives $h^\top y_{\rm cert} \le \Phi$.  Witnesses that fail it are replaced by $y_{\rm cert} = 0$, for which $L_{\rm cert} = 0 \le \Phi$ trivially holds.  Note the singular case is exactly where the wrong norm inverts the logic: for $\Gamma_\ell = 0$ the correct requirement is $z_\ell = 0$, whereas the primal-norm ratio would return $\nu = 0$ and certify \emph{any} witness.
\end{proof}

The repair map need not be a dense pseudoinverse.  In the Pauli-incidence implementation used here, it is the sparse assignment described in Proposition~\ref{prop:repair}: one visible copy absorbs each global-coefficient residual.  The operation is linear time, reconstructible by an independent verifier, and scales to the production landscapes ($\sim\!10^4$ contexts, $\sim\!10^6$ variables).  A certificate built from a more general visible basis must instead store or deterministically specify a valid $B$ satisfying $ABA=A$.  The correction is also typically tiny (on the $f$-element instances $\|r\|_2 \le 2.6\times10^{-6}$, so $U_{\rm cert} - U \lesssim 10^{-5}$), but it is what converts an approximately feasible numerical answer into a statement about the true optimum.

The verifier implements exactly this recomputation: from the stored Hamiltonian, dictionary, covariance model, and \emph{raw} $(x,y)$, it rebuilds $x_{\rm cert}$, $y_{\rm cert}$, $U_{\rm cert}$, $L_{\rm cert}$ and reports those, never a stored ``certified'' value.  Its test suite pins the failure modes: the scalar false-certificate instance above; a singular $\Gamma$ with the range condition violated (vacuous $L = 0$ returned, not a spurious bound); a singular $\Gamma$ with the condition satisfied (tight bracket); deliberately corrupted $(x,y)$ pairs (brackets ordered, primal residuals repaired to $10^{-15}$); a bracket verified to sandwich an independently computed optimum; and a sparse repair on a $10^4$-variable landscape without any dense factorization.  We describe these as \emph{feasibility-repaired numerical certificates with declared tolerances}: the repair is exact in real arithmetic, and the verifier reports the floating-point residuals it actually achieves rather than claiming directed-rounding interval arithmetic, which we do not implement.  Without it, near-optimal solver output can exhibit \emph{inverted} raw bounds ($U < L$ at the $10^{-7}$ level), as our own H$_2$ runs do.

\section{Certificate schema}
\label{app:schema}

A minimal machine-checkable certificate for a finite dictionary contains the following fields:
\begin{align}
  &h,\quad
  \{A_\ell\}_{\ell=1}^L,
  \quad
  \{\Gamma_\ell\}_{\ell=1}^L,
  \quad
  \{c_\ell\}_{\ell=1}^L,\\
  &\{x_\ell\}_{\ell=1}^L,
  \quad
  y,
  \quad
  B\ \text{or a deterministic repair rule},
  \quad
  \text{declared tolerances}.
\end{align}
The verifier performs five operations:
\begin{enumerate}[leftmargin=2.2em]
\item verify that $h\in\operatorname{range}(A)$, reconstruct or validate the declared repair map through $ABA=A$, and repair the raw primal residual;
\item recompute the feasible primal value $U_{\rm cert}$;
\item test $A_\ell^Ty\in\operatorname{range}(\Gamma_\ell)$ for every $\ell$, rescale a range-valid witness to dual feasibility, and return $L_{\rm cert}=0$ if any range test fails;
\item recompute the dual value $L_{\rm cert}=h^Ty_{\rm cert}$;
\item report $U_{\rm cert}/L_{\rm cert}-1$, the corresponding shot overhead, and the achieved floating-point residuals.
\end{enumerate}
The same schema works for robust certificates by using $\overline{\Gamma}_\ell$ in the primal and $\underline{\Gamma}_\ell$ in the dual.

\section{Benchmark protocol for a submission-level study}
\label{app:benchmark}

A decisive benchmark should include both performance and certificates.  For each molecule or many-body instance, report:
\begin{equation}
  \text{RMSE versus shots},\qquad U,\qquad L,\qquad U/L-1,
  \qquad \nu_y^*-1.
\end{equation}
The baselines should include ungrouped Pauli measurement, qubit-wise commuting grouping, fully commuting or stabilizer grouping, overlapped grouping, deterministic or derandomized shadow schedules, ShadowGrouping, resource-optimized grouping shadows, dual-frame optimized shadows where informationally complete data are used, and fermionic fragments where the encoding supports them.  A benchmark with only weak baselines would not test the certificate idea.  The certificate is most valuable when it distinguishes three cases: the new schedule beats the baseline; the baseline is already near-optimal in the declared landscape; or the landscape itself needs to be enlarged.

\bibliography{refs}

\end{document}